\documentclass[11pt,reqno]{article}
\usepackage[margin=1in]{geometry}
\usepackage{dsfont, amssymb,amsmath,amscd,latexsym, amsthm, amsxtra,amsfonts}
\usepackage{lineno}
\usepackage[all]{xy}
\usepackage[active]{srcltx}

\usepackage{tikz}
\usepackage{lipsum}
\usepackage{amsfonts}
\usepackage{graphicx}
\usepackage{epstopdf}
\usepackage{algorithm}
\usepackage{algorithmic}
\usepackage{xcolor}
\usepackage{booktabs}
\usepackage{array}
\usepackage[round]{natbib}
\usepackage{bbm}
\usepackage{enumerate}
\usepackage{mathrsfs}
\usepackage{graphicx}
\usepackage{comment}
\usepackage{mathtools}
\usepackage{cases}
 \usepackage{tikz}
\usetikzlibrary{calc,arrows}
\usepackage{verbatim}
\usepackage{graphicx}
\usepackage{color}

\usepackage{tikz}
\usetikzlibrary{calc,arrows}
\usepackage{verbatim}
\usepackage{graphicx}
\usepackage{subfigure}
\usepackage{epstopdf}
\usepackage{verbatim}
\usepackage{graphicx}
\usepackage{color}
\usepackage[affil-it]{authblk}

\newtheorem{theorem}{Theorem}[section]
\newtheorem{corollary}[theorem]{Corollary}
\newtheorem{proposition}[theorem]{Proposition}
\newtheorem{lemma}[theorem]{Lemma}

\theoremstyle{definition}
\newtheorem{definition}[theorem]{Definition}

\theoremstyle{definition}

\theoremstyle{definition}

\theoremstyle{definition}
\newtheorem{assumption}{Assumption}

\renewcommand{\epsilon}{\varepsilon}

\usepackage[pdfstartview=FitH, bookmarksnumbered=true,bookmarksopen=true, colorlinks=true, pdfborder=001, citecolor=blue, linkcolor=blue,urlcolor=blue]{hyperref}
\usepackage{graphics}
\graphicspath{{figures/}}
\usepackage{amsopn}

\DeclareMathOperator{\var}{Var}
 \title{Stackelberg reinsurance and premium decisions with MV criterion and irreversibility}
 
  \author{Zongxia Liang\thanks{Email: \texttt{liangzongxia@mail.tsinghua.edu.cn}}\ \ \ \ \  Xiaodong Luo\thanks{Email: \texttt{luoxd21@mails.tsinghua.edu.cn}}
 }	
\affil{Department of Mathematical Sciences, Tsinghua University, China}
\date{}

\begin{document}
\maketitle
\begin{abstract}
We study a reinsurance Stackelberg game in which both the insurer and the reinsurer adopt the mean-variance (abbr. MV) criterion in their decision-making and the reinsurance is irreversible. We apply a unified singular control framework where irreversible reinsurance contracts can be signed in both discrete and continuous times. The results theoretically illustrate that, rather than continuous-time contracts or a bunch of discrete-time contracts, a single once-for-all reinsurance contract is preferred. Moreover, the Stackelberg game turns out to be centering on the signing time of the single contract. The insurer signs the contract if the premium rate is lower than a time-dependent threshold and the reinsurer designs a premium that triggers the signing of the contract at his preferred time. Further, we find that reinsurance preference, discount and reversion have a decreasing dominance in the reinsurer's decision-making, which is not seen for the insurer.
\vskip 10 pt \noindent
2020
{\bf Mathematics Subject Classification:}  91A65, 91G05, 93E20
 \vskip 10pt  \noindent
{\bf Keywords:} Stackelberg game; singular control; MV criterion; time inconsistency; reinsurance and premium strategies.
\vskip 5pt\noindent
\end{abstract}

\section{Introduction}

Reinsurance is an effective tool for insurers to transfer their risk to reinsurers. In practice, the insurer pays a reinsurance premium for passing part of the risk to the reinsurer. Determining the optimal amount of transferred risk or developing an optimal reinsurance strategy is an essential problem for insurers to have better business capacity and solvency.

The modeling of reinsurance or reinsurance contracts is an essential topic in financial mathematics and actuarial science. It has long been a dilemma that the risk or surplus should be measured in continuous time while the reinsurance contracts are signed at some discrete time points. On the one hand, the major stream of research on reinsurance uses continuous-time models where the reinsurance contracts are signed or adjusted in continuous times. However, the theoretical results can hardly be applied due to the infeasibility of continuous actions on reinsurance contracts. On the other hand, using discrete-time models is at the cost of approximating the risk or surplus in discrete times; see \cite{bazyari2023ruin} and \cite{boonta2019approximation}. There is also literature that combines continuous-time risk or surplus and discrete-time reinsurance contracts; see \cite{dickson2006optimal} and \cite{federico2023irreversible}. However, none of the literature theoretically justifies the necessity of discrete-time reinsurance contracts under a continuous-time formulation of risk or surplus.

In this paper, we propose a theoretical Stackelberg game model with MV criterion in a unified singular control framework considering both discrete-time and continuous-time reinsurance contracts, and the reinsurance is irreversible. In the Stackelberg game, the reinsurer is the leader who offers many reinsurance contracts presented by the reinsurance premium rates, while the insurer is the follower who can sign reinsurance contracts at both continuous and discrete times. The objectives of both the insurer and the reinsurer are of MV type and the reinsurance contracts cannot be reversed. The insurer's problem is a time-inconsistent singular control problem while the reinsurer's problem inherits the singularity of the follower's problem through the Stackelberg game; see Subsection \ref{subsec-es}. For the insurer's problem, we obtain the extended HJB equations through a technical variational argument. Then, we make an ansatz of the solution structure, from which we explicitly obtain a solution. The solution is verified by a dedicated verification theorem. Then, analyzing the equivalent equilibrium conditions, we solve the reinsurer's problem. The reinsurer's problem can be transformed into an explicitly solvable time-inconsistent time-selection problem, whose equilibrium is similar to those of time-inconsistent stopping problems; see \cite{bjork2021time} and references therein.

The equilibrium solutions of our reinsurance Stackelberg game have significant financial and actuarial implications.

First, the equilibrium reinsurance strategy is to purchase a lump sum of all remaining reinsurance reserves the first time when the price becomes lower than a time-dependent threshold. It implies that, under MV criterion, signing reinsurance contracts in discrete times is preferred than signing reinsurance contracts continuously. Moreover, signing a single once-for-all reinsurance contract at a carefully chosen time is preferred than signing a bunch of reinsurance contracts at different times. 

Second, the equilibrium premium strategy is to set the premium exactly the threshold that triggers a reinsurance contract if the reinsurer has an intention to sell reinsurance, while the premium can be arbitrarily set above the threshold if he has no intention. It implies that the preferred reinsurance premium should be designed to trigger the reinsurance contract at the time desired by the reinsurer. Moreover, the time to sign the single once-for-all contract is at the core of decision-making for both the insurer and the reinsurer.

Third, we find that reinsurance preferences, reversions and discounts of the insurer and the reinsurer are key factors that influence the equilibrium reinsurance and premium strategies. Fixing any reinsurance premium, the equilibrium reinsurance strategy of the insurer is only affected by her own reinsurance preference, reversion and discount. Enlarging any of these three factors postpones the time to sign the single once-for-all reinsurance contract. The reinsurer's equilibrium premium strategy is affected by all six factors, i.e., reinsurance preferences, reversions and discounts of the insurer and the reinsurer. Further, comparing the influences of these parameters on the reinsurer's intention to sell reinsurance, we find that reinsurance preference is more dominant than discount while discount is more dominant than reversion.

The major contributions of this paper are as follows.

First, we study a reinsurance Stackelberg game in the singular control approach. The singular control approach provides a unified framework for both discrete-time and continuous-time reinsurance under irreversibility. Existing literature on reinsurance Stackelberg games is largely restricted to continuous-time reinsurance where there is no irreversibility. However, despite the feasibility of reversing reinsurance contracts, continuously signing or adjusting the reinsurance contract is impractical. Our finding theoretically suggests that, under the MV criterion with irreversibility, a discrete-time once-for-all reinsurance is preferred and the reinsurance Stackelberg game between the insurer and the reinsurer centers on the best time to sign the contract.

Mathematically, studying the reinsurance Stackelberg game in a singular control framework is difficult. Explicit solutions to the extended HJB equations of singular control problems are difficult to obtain.   Let alone solve the leader's problem in addition to the follower's singular control problem, where the optimal/equilibrium singular control should be substituted into the leader's problem. A major obstacle lies in handling the smoothness of solutions on the boundary of an immediate jump. However, this issue is tackled by the special structure of our ansatz with properly weakened smoothness conditions and we explicitly solve both problems of the follower and the leader.

Second, in the insurer's problem, we provide an instance of a finite-time singular control solution where the jump, i.e., immediate reinsurance purchase or immediate execution of an action, can occur after the initial time. The phenomenon is economically plausible in common sense. However, such cases are rare in singular control literature for an optimal or equilibrium solution. The reason lies in that the waiting regions and the purchasing regions in their models are time-independent; see \cite{yan2022irreversible} and \cite{ferrari2021optimal} for example. In such cases, the optimal or equilibrium solution that solves the corresponding Skorohod reflection problem can only jump at the initial time. The two direct causes of time-independence are: first, their models are infinite-time models; second, the pre-discount cost rate in the penalty term of singular control is time-independent, i.e., time-independent reinsurance premium rate in irreversible reinsurance. Both causes do not suit our model because the MV criterion requires the finite-time horizon and the Stackelberg game requires the reinsurance premium rate as the reinsurer's strategy to be time-dependent. The time to sign the reinsurance contract turns out to be the most important focus of our reinsurance Stackelberg game.

Technically, finite-time horizon and time-dependence cause great difficulty in obtaining an explicit solution even for regular control problems. The reason is that finite-time horizon and time-dependence result in HJB equations of PDE types while the equations are ODEs otherwise. We resolve these issues even in a singular control problem by properly making an ansatz and using a dedicated verification argument.

Third, the solution to the insurer's problem is the first explicit solution to a time-inconsistent singular control problem under the equilibrium framework of \cite{liang2023equilibria}. The solution provides an example of the existence of the equilibrium for a time-inconsistent singular control problem. \cite{liang2023equilibria} theoretically proves the existence of equilibrium for the problems of the non-exponential discount type under additional conditions. The time-inconsistency of the insurer's problem in this paper is caused by the MV criterion, which has a different cause but again supports the existence of such kind of equilibrium. 

The rest of the paper is organized as follows: In Section \ref{sec-model}, we introduce our model of reinsurance Stackelberg game with MV criterion and irreversibility and define the equilibrium strategies of the insurer and the reinsurer. We solve the insurer's problem in Section \ref{sec-insurer} and obtain the insurer's equilibrium reinsurance strategy. In Section \ref{sec-reinsurer}, we solve the reinsurer's problem and obtain the reinsurer's equilibrium premium strategy. We give an interpretation of the equilibrium strategies of the insurer and the reinsurer in Section \ref{sec-interpretation}. We conclude in Section \ref{sec-conclusion}.

\subsection{Literature review}
In recent years, the research on reinsurance has largely incorporated the theory of stochastic games, especially Stackelberg game. In addition to finding the optimal reinsurance strategy for the insurer, the Stackelberg game approach also gives ideas for the reinsurer on how to price the reinsurance premium. Dating back to \cite{tapiero1981optimum}, there is a vast literature on the reinsurance Stackelberg game; see \cite{bai2021stackelberg},  \cite{yang2022robust} and \cite{gavagan2022optimal}. The reinsurance Stackelberg game has been extended recently by considering multiple insurers or reinsurers; see \cite{bai2022hybrid}, \cite{wang2023optimal}, \cite{zhu2023equilibria} and \cite{cao2023stackelberg}. There are other types of stochastic games applied in reinsurance problems, including cooperative and competitive games; see \cite{yang2024time}, \cite{yang2021equilibrium} and \cite{jiang2019optimal}.

A major extension to the reinsurance Stackelberg game focuses on the equilibrium solutions under time-inconsistency. The equilibrium here refers to the time-inconsistency equilibrium in addition to the Stackelberg game equilibrium. In practice, the time-inconsistency is usually caused by the MV criterion, where the decision-maker usually maximizes the mean while minimizes the variance. For an optimization problem with an MV-type objective, the optimal solution now is not optimal the next moment and one can only seek an equilibrium solution rather than an optimal solution. The literature of this stream includes \cite{li2022stackelberg}, \cite{chen2019stochastic} and \cite{han2023optimal}. Recent studies finds that the Stackelberg game structure can also lead to time-inconsistency, which strengthens the importance of equilibrium solutions in Stackelberg games; see \cite{zhou2023time} and references therein.

Recently, there has been an emerging stream of research focusing on the irreversibility of reinsurance. The irreversibility of reinsurance requires that the reinsurance contracts can not be reversed. The pioneering work \cite{yan2022irreversible} models the irreversibility by considering the transferred risk as a singular control. Another piece of research \cite{federico2023irreversible} models the irreversibility by adding a starting time of the reinsurance contract as a control variable.

In the singular control approach, the controls are formed in continuous times to incorporate continuous actions while the discrete actions are presented by the jumps of the controls. The main stream of literature on reinsurance applies the continuous-time formulation, which only accounts for continuous-time reinsurance. There are also discrete-time formulations (\cite{bazyari2023ruin},  \cite{boonta2019approximation}) and continuous-time formulations with discrete-time reinsurance (\cite{dickson2006optimal},  \cite{federico2023irreversible}). However, these approaches cannot put both continuous-time and discrete-time reinsurance contracts together for comparison. \cite{yan2022irreversible} applies the singular control approach, but their result suggests a preference for hybrid reinsurance, which is different from our intuitively more reasonable discrete-time reinsurance. We give a detailed comparison in the next paragraph.

Let us mention two pieces of literature most related to this paper. One is \cite{yan2022irreversible} which first applies a singular control model to study reinsurance with irreversibility. The formulation of risk exposure and accumulated transferred risk as a singular control in our model is inspired by the model in \cite{yan2022irreversible}. We also adopt a reinsurance ceiling as in \cite{yan2022irreversible} but such constraint on reinsurance is not rare; see \cite{jiang2019optimal} and the references in \cite{yan2022irreversible}. However, we are different in the following aspects which makes our problem more challenging and leads to intuitively more reasonable results. First, our problem is planned in finite time which is more reasonable because the duration of (irreversible) reinsurance contracts is very long and is at least longer than the planning time. Second, we further consider the reinsurer's problem through the Stackelberg game, which forces the reinsurance premium rate to depend on time. Third, the well-known MV criterion leads to time-inconsistency which is new to combine with singular control. 

The other piece of closely related literature is \cite{liang2023equilibria} which theoretically establishes an equilibrium framework for time-inconsistent singular control problem. We apply the equilibrium framework in \cite{liang2023equilibria} to study the insurer's problem. The derivation of extended HJB equations and existence results in \cite{liang2023equilibria} is limited to non-exponential discount, which does not directly apply to time-inconsistency of MV type. Moreover, we obtain an explicit equilibrium solution, which is presumably difficult for time-inconsistent singular control problems. Our explicit solution supports the general existence of the equilibrium in \cite{liang2023equilibria}.

\section{Model Formulation}
\label{sec-model}
Let $(\Omega,\mathcal{F},\{\mathcal{F}_{t}\}_{t\ge 0}, \mathbb{P})$ be a filtered probability space satisfying the usual conditions. There are two independent standard one-dimensional Brownian motions $\{B^{I}_{t}\}_{t\ge 0}$ and $ \{B^{R}_{t}\}_{t\ge 0}$ on this space.

Denote the uncontrolled risk exposure processes of the insurer and the reinsurer by $\{X_{t}\}_{t\ge 0}$ and $\{Z_{t}\}_{t\ge 0}$ respectively. We assume that both risk exposure processes follow the mean-reverting dynamic,
\begin{align*}
&\left\{
\begin{array}{l}
dX_{t}=(a^{I}-b^{I}X_{t})dt+\sigma^{I} dB^{I}_{t},\ t\ge 0,\\
X_{0-}=x_{0},
\end{array}
\right.\\
&\left\{
\begin{array}{l}
dZ_{t}=(a^{R}-b^{R}Z_{t})dt+\sigma^{R} dB^{R}_{t},\ t\ge 0,\\
Z_{0-}=z_{0},
\end{array}
\right.
\end{align*}
where $a^{I}, a^{R}, b^{I}, b^{R}, \sigma^{I}, \sigma^{R}$ are all endogenously given constants. The reversion coefficients $b^{I}, b^{R}$ and the volatility coefficients $\sigma^{I}, \sigma^{R}$ are all positive.

The insurer can irreversibly transfer its risk exposure to the reinsurer by paying a reinsurance premium to the reinsurer. The reinsurance premium rate is issued by the reinsurer. Denote the accumulated amount of risk exposure covered by reinsurance up to time $t$ by $\xi=\{\xi_{t}\}_{t\ge 0}$, which is non-decreasing and c$\acute{a}$dl$\acute{a}$g with initial $\xi_{0-}=y_{0}\ge 0$. We assume that the reinsurance coverage has a ceiling (maximum amount) $\bar{y}>0$, which is an endogenously given constant. Such ceiling on reinsurance is natural and not rare in literature; see \cite{jiang2019optimal}, \cite{yan2022irreversible} and references therein.

The controlled risk exposure processes of the insurer and the reinsurer, denoted by $X^{\xi}$ and $Y^{\xi}$ respectively, follow the stochastic differential equations
\begin{align*}
&\left\{
\begin{array}{l}
dX^{\xi}_{t}=(a^{I}-b^{I}X^{\xi}_{t})dt+\sigma^{I} dB^{I}_{t}-d\xi_{t},\ t\ge 0,\\
X^{\xi}_{0-}=x_{0},
\end{array}
\right.\\
&\left\{
\begin{array}{l}
dZ^{\xi}_{t}=(a^{R}-b^{R}Z^{\xi}_{t})dt+\sigma^{R} dB^{R}_{t}+d\xi_{t},\ t\ge 0,\\
Z^{\xi}_{0-}=z_{0}.
\end{array}
\right.
\end{align*}

\subsection{The Stackelberg game and admissible strategies}

In this subsection, we state the Stackelberg game between the insurer and the reinsurer and introduce their admissible strategies respectively.

We formulate the reinsurance problem as a Stackelberg differential game. The problem is considered on a finite time horizon $[0, T]$ with constant $T>0$. The reinsurer is the leader that offers a reinsurance premium rate $c=\{c(t)\}_{t\in[0, T]}$ from the set of all admissible reinsurance premium rates $\mathcal{C}$, which is the set of all nonnegative processes. We refer to $c$ as the reinsurer's premium strategy. The insurance company is the follower that chooses a reinsurance strategy $\xi\in\mathcal{D}_{[0, T]}$ based on the offered reinsurance premium rate, where $\mathcal{D}_{[0, T]}$ is the set of all admissible reinsurance strategies on $[0, T]$. We define the admissible reinsurance strategies in Definition \ref{ads}.

Given the reinsurer's premium strategy $c$, the insurer chooses a reinsurance strategy $\xi(c)$ to minimize
\begin{equation*}
J^{I}(x,t,y;\xi):=\mathbb{E}_{x,t,y}X^{\xi}_{T}+\gamma^{I}\var_{x,t,y}X^{\xi}_{T}+\theta^{I}\mathbb{E}_{x,t,y}\int_{t}^{T}e^{\rho^{I}(T-r)}c(r)d\xi_{r}.
\end{equation*}
The positive constants $\gamma^{I} $ and $\theta^{I}$ balance the three objective terms, which reflects the insurer's preferences on variance and reinsurance. The constant $\rho^{I}$ is the insurer's time discount factor. The integration w.r.t $d\xi$ in the last term is defined by
\begin{equation*}
\int_{t}^{T}c(r)d\xi_{r}:=\int_{t}^{T}c(r)d\xi^{c}_{r}+\sum\limits_{r\in[t,T]}c(r)\Delta\xi_{r},
\end{equation*}
where $\xi^{c}_{r}$ and $\Delta\xi_{r}:=\xi_{r}-\xi_{r-}$ are the continuous and discrete parts of $\xi_{r}$.

Denote the state-time-control space of the insurer's problem by 
\begin{equation*}
\mathcal{Q}:=\mathbb{R}\times[0,T]\times[0,\bar{y}].
\end{equation*}
The definition of an admissible reinsurance strategy is as follows.
\begin{definition}[admissible reinsurance strategy]
	\label{ads}
	Given initial $(x,t,y)\in\mathcal{Q}$, suppose that the process $\{\xi_{r}\}_{r\in[t,T]}$ is an $\mathcal{F}$-adapted, non-decreasing,  c$\acute{a}$dl$\acute{a}$g  process and satisfies\\
	(a) $\xi_{t-}=y$, $\xi_{T}\le\bar{y}$ and the stochastic differential equation
	\begin{equation*}
	\left\{
	\begin{array}{l}
dX^{\xi}_{r}=(a^{I}-b^{I}X^{\xi}_{r})dr+\sigma^{I}dB^{I}_{r}-d\xi_{r},\ r\in[t,T],\\
	X^{\xi}_{t-}=x.
	\end{array}
	\right.
	\end{equation*}
	has a unique strong  solution $\{X^{\xi}_{r}\}_{r\in[t,T]}$.\\
	(b) The $\{\xi_{r}\}_{r\in[t,T]}$ and $\{X^{\xi}_{r}\}_{r\in[t,T]}$ given in (a) satisfy
	\begin{align*}
	&\mathbb{E}_{x,t,y}(X^{\xi}_{T})^{2}<\infty,\\
	&\mathbb{E}_{x,t,y}\int_{t}^{T}e^{\rho^{I}(T-r)}c(r)d\xi_{r}<\infty.
	\end{align*}
(c) For any $t\in[0,T)$, $\Delta\xi_{t}$ is $\mathcal{F}_{t-}$ measurable and $\sum\limits_{r\in(t,t+h)}\Delta\xi_{r}=o(h)$ as $h\rightarrow 0$.
 
	Then we call $\{\xi_{r}\}_{r\in[t,T]}$ an admissible reinsurance strategy on $[t,T]$. We denote the set of all admissible reinsurance strategies on $[t, T]$ by $\mathcal{D}_{[t, T]}$.
\end{definition}

For each reinsurer's premium strategy  $c\in\mathcal{C}$, the insurer develops a reinsurance strategy $\xi=\xi(c)\in\mathcal{D}_{[0,T]}$. Knowing the insurer's rule $c\mapsto\xi(c)$ to develop a reinsurance strategy, the reinsurer chooses a  premium strategy $c\in\mathcal{C}$ to minimize
\begin{equation*}
J^{R}(x,t,y,z;c):=\mathbb{E}_{z,t}Z^{\xi}_{T}+\gamma^{R}\var_{z,t}Z^{\xi}_{T}-\theta^{R}\mathbb{E}_{z,t}\int_{t}^{T}e^{\rho^{R}(T-r)}c(r)d\xi_{r}
\end{equation*}
with $\xi=\xi(c)$. The positive constants $\gamma^{R} $ and $\theta^{R}$ balance the three objective terms, which reflects the reinsurer's preferences on variance and reinsurance. The constant $\rho^{R}$ is the reinsurer's time discount factor. Because the process $\xi$ depends on the initial $(x,t,y)$, $J^{R}$ also depends on  $(x,t,y)$.

\subsection{Equilibrium strategies}
\label{subsec-es}
Both the insurer's and reinsurer's problems are time-inconsistent due to the variance term in the objectives. This subsection introduces the definitions of equilibrium strategies of the insurer and the reinsurer.

The insurer's problem is a time-inconsistent singular control problem. First, we define the insurer's admissible reinsurance law.

\begin{definition}[the insurer's admissible reinsurance law]
\label{adl}
Let $\Xi:=(W^{\Xi},P^{\Xi})$ be a division of $\mathcal{Q}$. We call $\Xi$ an 
 insurer's admissible reinsurance law if given any initial $(x,t,y)\in\mathcal{Q}$, the following Skorohod reflection problem
	\begin{equation}
	\left\{
	\begin{array}{l}
dX^{\xi}_{r}=(a^{I}-b^{I}X^{\xi}_{r})dr+\sigma^{I}dB^{I}_{r}-d\xi_{r},\ r\in[t,T],\\
	(X^{\xi}_{r},r,\xi_{r})\in\overline{W^{\Xi}}, r\in[t,T],\\
	\xi_{r}=\int_{t}^{r}1_{\{(X^{\xi}_{u},u,\xi_{u})\in P^{\Xi}\}}d\xi_{u},\ r\in[t,T],\\ 
	X^{\xi}_{t-}=x,\ \xi_{t-}=y,
	\end{array}
	\right.
	\label{dynamic1}
	\end{equation}
	has a unique strong solution $(X^{\xi},\xi):=(X^{x,t,y,\Xi},\xi^{x,t,y,\Xi})$  in $[t,T)$ and (b) of Definition \ref{ads} is satisfied. We call $\xi$ a reinsurance strategy generated by $\Xi$ at $(x,t,y)$.
 \end{definition}
Then we define the insurer's equilibrium reinsurance law and the equilibrium reinsurance strategy as follows.

\begin{definition}[the insurer's equilibrium reinsurance law; equilibrium reinsurance strategy]
\label{eqli}
 Suppose  that $\hat{\Xi}$ is an insurer's admissible reinsurance law, we call $\hat{\Xi}$ an insurer's equilibrium reinsurance law if the following (a) and (b) hold.\\
 (a) For any initial $(x,t,y)\in\mathcal{Q}$ with $t<T$, any $\eta\in\mathcal{D}_{[t,T]}$ with $\eta_{t-}=y$, we have
	\begin{displaymath}
	\liminf\limits_{h\rightarrow 0}\frac{J^{I}(x,t,y;\xi^{h})-J^{I}(x,t,y;\xi^{x,t,y,\hat{\Xi}})}{h}\ge 0,
	\end{displaymath}
	where $\xi^{h}, h\in(0,T-t)$, is defined by
	\begin{equation*}
	\xi^{h}_{r}=\left\{
	\begin{array}{l}
	\eta_{r}, \ r\in[t,t+h),\\
	\xi^{X^{\eta}_{(t+h)-},t+h,\eta_{(t+h)-},\hat{\Xi}}_{r}, \ r\in[t+h,T].
	\end{array}
	\right.
	\end{equation*}
	(b) For any initial $(x,t,y)\in\mathcal{Q}$ with $t=T$, any $\eta\in\mathcal{D}_{[T,T]}$ with $\eta_{T-}=y$, we have
	\begin{displaymath}
	J^{I}(x,t,y;\eta)-J^{I}(x,t,y;\xi^{x,t,y,\hat{\Xi}})\ge 0.
	\end{displaymath}
 
 We call $J^{I}(x,t,y;\xi^{x,t,y,\hat{\Xi}})$ the corresponding equilibrium value function of $\hat{\Xi}$. Any reinsurance strategy generated by an insurer's equilibrium reinsurance law $\hat{\Xi}$ defined (\ref{dynamic1}) is called an equilibrium reinsurance strategy.
\end{definition}

Now let us introduce the equilibrium of the reinsurer's problem. The reinsurer's problem seems to be a time-inconsistent regular control problem at first glance. We will see that there are many features similar to singular control problems or stopping problems, e.g., partly changing the control value may lead to no change in the objective value; changing the control's value at one point may lead to a change in the objective value. The main reason is that the insurer's problem is a singular control problem and the singularity features can be transferred to the reinsurer's problem through the equilibrium control.

Further, we will show that the reinsurer's problem can be transformed into a time-inconsistent time-selection problem, which is similar to time-inconsistent stopping (see \cite{bjork2021time}). Moreover, the reinsurer's equilibrium corresponds to the equilibrium of the time-inconsistent time-selection problem, which can be defined in the same way as time-inconsistent stopping problems. 

The following definition follows the standard perturbation procedure in defining a time inconsistent equilibrium. We emphasize here that the terminal optimality condition (b) is required due to the transferred singular structure.

\begin{definition}[equilibrium premium strategy]
\label{eqps}
We call $\hat{c}\in\mathcal{C}$ an equilibrium premium strategy if the following (a) and (b) hold.\\
(a) For any initial $(x,t,y,z)\in\mathcal{Q}\times\mathbb{R}$ with $t<T$ and  $\tilde{c}\in\mathcal{C}$, we have
	\begin{displaymath}
	\liminf\limits_{h\rightarrow 0}\frac{J^{R}(x,t,y,z;c^{h})-J^{R}(x,t,y,z;\hat{c})}{h}\ge 0,
	\end{displaymath}
	where $c^{h}, h\in(0,T-t)$, is defined by
	\begin{equation*}
	c^{h}_{r}=\left\{
	\begin{array}{l}
	\tilde{c}_{r}, \ r\in[t,t+h),\\
	\hat{c}_{r}, \ r\in[t+h,T],
	\end{array}
	\right.
	\end{equation*}
 and $\xi$ in the expression of $J^{R}(x,t,y;\cdot)$ is given by $\xi=\xi^{x,t,y;\hat{\Xi}}(\cdot)$ with $\hat{\Xi}$ being an insurer's equilibrium reinsurance law.\\
 (b) For any initial $(x,t,y,z)\in\mathcal{Q}\times\mathbb{R}$ with $t=T$ and $\tilde{c}\in\mathcal{C}$, we have
 \begin{equation*}
 J^{R}(x,t,y,z;\tilde{c})-J^{R}(x,t,y,z;\hat{c})\ge 0.
 \end{equation*}
\end{definition}

\subsection{An assumption}

We end this section with an assumption for the rest of the paper. Readers may just accept the assumption and skip the explanation behind it at first reading.

\begin{assumption}
\label{assump}
We assume that the insurer would purchase as much reinsurance as possible if purchasing reinsurance makes no difference to her overall objective $J^{I}$.
\end{assumption}

The most direct intention of Assumption \ref{assump} is to make the state path $X^{x,t,y,\Xi}$ and control path $\xi^{x,t,y,\Xi}$ determined at terminal time $T$, see Definition \ref{adl}. Hereafter, we refer $X^{x,t,y,\Xi}$ and $\xi^{x,t,y,\Xi}$ to the processes uniquely determined by (\ref{dynamic1}) on $[t, T]$ under Assumption \ref{assump}.

There might be other alternative assumptions that determine the insurer's behavior in such a case, e.g., she purchases half the largest possible amount or does not purchase at all. However, we adopt Assumption \ref{assump} in the sense that it makes our analysis easiest to understand.

For the insurer's problem in Section \ref{sec-insurer}, Assumption \ref{assump} merges the terminal boundary region where $\theta^{I}c(T)=1$ into the purchasing region. With some other assumptions, the purchasing region can be shrunk as the jump of equilibrium reinsurance strategy might be smaller. In an extreme case, if we assume  that the insurer does not purchase at all in the situation, then even the whole terminal boundary region where $\theta^{I}c(T)=1$ can be merged into the waiting region. Other than that, with or without the assumption, or with which assumption, does not have an impact on the insurer's equilibrium reinsurance law or equilibrium reinsurance strategy before $T$. In fact, we can use the alternative auxiliary function $\tilde{g}(x,t,y)=E_{x,t,y}X^{\hat{\xi}_{T-}}$ in an alternative verification theorem for the insurer's equilibrium reinsurance law before $T$. However, posing an assumption determines the terminal value of the auxiliary function $g$, which enables the verification theorem we use. 

For the reinsurer's problem in Section \ref{sec-reinsurer}, Assumption \ref{assump} determines the amount of risk transferred to the reinsurer if the purchasing time $p(t;c)$ equals $T$, which ensures the well-posedness of the reinsurer's problem. The amount of risk transferred is the same as that of the case $p(t;c)<T$ under Assumption \ref{assump}. Other alternative assumptions can also ensure the well-posedness of the reinsurer's problem, but the case $p(t;c)=T$ corresponds to a different amount of risk transferred. However, the analysis is expected to be similar and we just consider the case under Assumption \ref{assump}.

\section{The Insurer's Problem}
\label{sec-insurer}

This section is dedicated to solving the insurer's problem. The insurer's problem is a time-inconsistent singular control problem. To obtain an equilibrium reinsurance strategy for the insurer, we only need to obtain the insurer's equilibrium reinsurance law, which generates equilibrium reinsurance strategies through (\ref{dynamic1}).

Assuming enough regularity, we obtain the extended HJB equations for the insurer's equilibrium reinsurance law and corresponding equilibrium value function in Subsection \ref{subsec-hjb}. Then, in Subsection \ref{subsec-verif}, we obtain a solution to the HJB equations and verify that it indeed leads to an insurer's equilibrium reinsurance law.

\subsection{Extended HJB equations from a verification theorem}
\label{subsec-hjb}
In this subsection, we derive the extended HJB equations of the insurer's problem by proposing a verification theorem.

The following verification theorem provides the extended HJB equations under enough regularity conditions. 

\begin{theorem}[verification theorem of the insurer's problem]
\label{verif-insurer}
Given two functions $V^{I}(x,t,y)$ and $g(x,t,y)$, define the infinitesimal operator $\mathcal{A}$ by
\begin{equation*}
(\mathcal{A}\varphi)(x,t,y):=\varphi_{t}(x,t,y)+(a^{I}-b^{I}x)\varphi_{x}(x,t,y)+\frac{1}{2}(\sigma^{I})^{2}\varphi_{xx}(x,t,y),\forall\varphi      \in C^{2,1,1}(\mathcal{Q}),
\end{equation*}
and the $\hat{\Xi}=(W^{\hat{\Xi}},P^{\hat{\Xi}})$ by
	\begin{align}
	W^{\hat{\Xi}}:=&\big\{(x,t,y)\in \mathcal{Q}\big|\theta^{I}e^{\rho^{I}(T-t)}c(t)-V^{I}_{x}(x,t,y)+V^{I}_{y}(x,t,y)>0\ {\rm or}\ y=\bar{y}\big\},\label{W}\\
	P^{\hat{\Xi}}:=&\big\{(x,t,y)\in \mathcal{Q}\big|\theta^{I}e^{\rho^{I}(T-t)}c(t)-V^{I}_{x}(x,t,y)+V^{I}_{y}(x,t,y)=0,y<\bar{y}\big\}.\label{P}
	\end{align}
 
	Assume the following properties:\\
	(1) 
	\begin{equation*}
		V^{I},g\in C^{2,1,1}(\mathcal{Q}).
	\end{equation*}
	(2) $V^{I}(x,t,y)$ and $g(x,t,y)$ satisfy \ $ \forall t<T $
	\begin{align}
	&\min\Big\{(\mathcal{A}V^{I})(x,t,y)+\frac{\gamma^{I}}{2}(\sigma^{I})^{2}g_{x}(x,t,y)^2,\theta^{I}e^{\rho^{I}(T-t)}c(t)-V^{I}_{x}(x,t,y)+V^{I}_{y}(x,t,y)\Big\}=0,\ \label{v1}\\
	&V^{I}(x,T,y)=x-(\bar{y}-y)\max\big\{1-\theta^{I}c(T),0\big\},\label{v2}\\
	&(\mathcal{A}g)(x,t,y)=0, \forall (x,t,y)\in \overline{W^{\hat{\Xi}}}, \ \label{v3}\\
	&g_{x}(x,t,y)-g_{y}(x,t,y)=0,\forall (x,t,y)\in P^{\hat{\Xi}}, \ \label{v4}\\
	&g(x,T,y)=x-(\bar{y}-y)1_{\theta^{I}c(T)\le 1},\label{v5}
	\end{align}
	for all $(x,t,y)\in\mathcal{Q}$, where $1_{A}$ in (\ref{v5}) is the indicator function that takes value $1$ if  $A$ holds and takes value $0$ otherwise.\\
	(3) $\hat{\Xi}$ is an insurer's admissible reinsurance law.\\
	(4) For $\varphi=V^{I},g,g^{2}$ and for any $\xi\in\mathcal{D}_{[t,T]}$,
	\begin{equation*}
	\mathbb{E}_{x,t,y}\int_{t}^{T}\varphi_{x}(X_{r}^{\xi},r,\xi_{r})^{2}dr<\infty.
	\end{equation*}
	Then $\hat{\Xi}$ is an insurer's equilibrium insurance law and $V^{I}$ is the corresponding equilibrium value function of $\hat{\Xi}$. Moreover, $g$ has the probabilistic interpretation
	\begin{equation}
	\label{intg}
g(x,t,y)=E_{x,t,y}X^{\hat{\xi}}_{T},
	\end{equation}
	where $\hat{\xi}:=\xi^{x,t,y,\hat{\Xi}}$ is the singular control generated by $\hat{\Xi}$ at $(x,t,y)$ and $X^{\hat{\xi}}:=X^{x,t,y,\hat{\Xi}}$ is the corresponding risk exposure process.
\end{theorem}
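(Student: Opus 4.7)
The plan is to proceed in two stages. First, I would establish the probabilistic interpretation (\ref{intg}) of $g$ along the equilibrium trajectory generated by $\hat{\Xi}$. Second, I would use that interpretation, together with the extended HJB system (\ref{v1})--(\ref{v5}) and an It\^o-type expansion of $V^{I}$ and of $g^{2}$, to verify the local inequalities in Definition \ref{eqli}(a)--(b).

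For the probabilistic interpretation, apply It\^o's formula to $g(X^{\hat{\xi}}_{r}, r, \hat{\xi}_{r})$ on $[t,T]$, where $\hat{\xi} := \xi^{x,t,y,\hat{\Xi}}$. In $\overline{W^{\hat{\Xi}}}$, (\ref{v3}) gives $\mathcal{A}g = 0$, so the drift integrates to zero and the diffusion part is a martingale by the integrability condition~(4). Any jump $\Delta \hat{\xi}_{r} = \Delta > 0$ must occur in $\overline{P^{\hat{\Xi}}}$, and the increment
\[
g(x-\Delta, r, y+\Delta) - g(x, r, y) \;=\; \int_{0}^{\Delta} \bigl(g_{y} - g_{x}\bigr)(x-u, r, y+u)\, du
\]
vanishes by (\ref{v4}); the continuous singular part of $\hat{\xi}$ contributes analogously. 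Taking expectation and plugging in the terminal condition (\ref{v5}) together with the Assumption that fixes $\hat{\xi}_{T}$, I obtain $g(x,t,y) = \mathbb{E}_{x,t,y} X^{\hat{\xi}}_{T}$.

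For the equilibrium inequality (a) at $t<T$, fix an admissible perturbation $\eta$ with $\eta_{t-}=y$ and, for small $h$, use the restart property of $\xi^{h}$ together with the law of total variance conditioned on $\mathcal{F}_{t+h}$ to decompose
\begin{align*}
J^{I}(x,t,y;\xi^{h}) - V^{I}(x,t,y)
&= \bigl\{ \mathbb{E}\, V^{I}(X^{\eta}_{(t+h)-}, t+h, \eta_{(t+h)-}) - V^{I}(x,t,y) \bigr\} \\
&\quad + \gamma^{I}\,\var\!\bigl[\, g(X^{\eta}_{(t+h)-}, t+h, \eta_{(t+h)-}) \,\bigr] \\
&\quad + \theta^{I}\,\mathbb{E}\!\int_{t}^{t+h} e^{\rho^{I}(T-r)} c(r)\, d\eta_{r}.
\end{align*}
Apply It\^o to each piece. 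The first bracket becomes $\mathbb{E}\!\int_{t}^{t+h} \mathcal{A}V^{I}\, dr$ plus a singular-action contribution $\mathbb{E}\!\int_{[t,t+h]}\!\int_{0}^{d\eta_{r}}\!\bigl(V^{I}_{y} - V^{I}_{x}\bigr)(X^{\eta}_{r-} - u, r, \eta_{r-}+u)\, du$; the variance term produces at leading order a $(\sigma^{I})^{2} g_{x}^{2}$-type Brownian contribution from (\ref{v3}) applied to $g$ (recall $\mathcal{A}(g^{2}) - 2g\mathcal{A}g = (\sigma^{I})^{2} g_{x}^{2}$); and the penalty term approaches the pointwise singular cost along the same segment. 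Dividing by $h$ and sending $h\to 0$, the continuous contribution reproduces exactly the first argument of the minimum in (\ref{v1}), while the singular contribution reproduces the second. Both are nonnegative by (\ref{v1}), yielding the desired $\liminf \ge 0$. Case (b) at $t=T$ reduces to a one-step comparison using (\ref{v2}) and the Assumption pinning down $\eta_{T}$.

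The main obstacle I anticipate is the handling of jumps when $\eta$ places discrete purchases inside $[t,t+h]$: the first-order contribution must be written as a line integral along the segment $u \mapsto (X^{\eta}_{r-}-u, r, \eta_{r-}+u)$, and the inequality $\theta^{I} e^{\rho^{I}(T-r)} c(r) - V^{I}_{x} + V^{I}_{y} \ge 0$ from (\ref{v1}) must hold pointwise along that segment, not just at its endpoint. Showing that the aggregate residual from possibly many such small jumps is genuinely $o(h)$ is exactly the role of condition (c) of Definition \ref{ads}, while the $C^{2,1,1}$ regularity in (1) and the square-integrability in (4) justify all It\^o expansions and secure the vanishing of stochastic-integral expectations; together these close the verification.
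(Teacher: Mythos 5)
Your overall strategy coincides with the paper's: first derive the probabilistic interpretation $g=\mathbb{E}_{x,t,y}X^{\hat{\xi}}_{T}$ by It\^o--Tanaka--Meyer together with (\ref{v3})--(\ref{v5}), then verify Definition \ref{eqli}(a) by restarting at $t+h$, invoking the law of total variance, expanding $V^{I}$, $g$ and $g^{2}$ over $[t,t+h]$, and reading off the two arguments of the minimum in (\ref{v1}) from the absolutely continuous and singular parts respectively. Your remarks on the line-integral form of the jump contribution and on condition (c) of Definition \ref{ads} controlling the aggregate of small jumps match the paper's treatment.

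There is, however, one substantive omission. Your decomposition starts from $J^{I}(x,t,y;\xi^{h})-V^{I}(x,t,y)$ and replaces the restarted cost $\mathbb{E}_{x,t,y}J^{I}\bigl(X^{\eta}_{(t+h)-},t+h,\eta_{(t+h)-};\hat{\xi}\bigr)$ by $\mathbb{E}_{x,t,y}V^{I}\bigl(X^{\eta}_{(t+h)-},t+h,\eta_{(t+h)-}\bigr)$. Both substitutions, as well as the theorem's claim that $V^{I}$ is the equilibrium value function, require the identity $V^{I}(x,t,y)=J^{I}(x,t,y;\hat{\xi})$, which you never establish. This is the paper's entire Step 2 and is not a one-liner: it needs a global It\^o expansion of $V^{I}$ and $g^{2}$ along the equilibrium trajectory on all of $[t,T]$, the identity $(\mathcal{A}V^{I})=-\gamma^{I}\mathcal{A}(g^{2})$ on $\overline{W^{\hat{\Xi}}}$ obtained by combining (\ref{v1}) with (\ref{v3}), the equality branch of (\ref{v1}) together with (\ref{v4}) on the singular set, and the terminal conditions (\ref{v2}), (\ref{v5}) plus the assumption pinning down $\hat{\xi}_{T}$. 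You should insert this step before the perturbation argument. Two smaller points: the carr\'e-du-champ identity you quote, $\mathcal{A}(g^{2})-2g\mathcal{A}g=(\sigma^{I})^{2}g_{x}^{2}$, differs by a factor of $2$ from the coefficient $\tfrac{\gamma^{I}}{2}(\sigma^{I})^{2}g_{x}^{2}$ appearing in hypothesis (\ref{v1}), so you must reconcile the normalization for the cancellation $\mathcal{A}V^{I}+\gamma^{I}\mathcal{A}(g^{2})\ge 2\gamma^{I}(g-g_{0})\mathcal{A}g$ to go through; and the leading-order identification of the variance term cannot be justified by (\ref{v3}) alone, since the perturbed path $X^{\eta}$ need not remain in $\overline{W^{\hat{\Xi}}}$ --- the correct argument is that the uncancelled $\mathcal{A}g$ contributions carry a factor $g(X^{\eta}_{r-},r,\eta_{r-})-g(x-\Delta\eta_{t},t,y+\Delta\eta_{t})$ that vanishes as $h\to 0$, which is exactly how the paper closes its final inequality.
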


\begin{proof}
{\bf Step 1:} We show that $g$ has the probabilistic interpretation (\ref{intg}).
Applying It\^{o}-Tanaka-Meyer's formula to $g$ on $[t,T)$, we obtain 
\begin{align*}
&g(X^{\hat{\xi}}_{T-},T,\hat{\xi}_{T-})-g(x,t,y)\\
=&\int_{t}^{T}(\mathcal{A}g)(X^{\hat{\xi}}_{r-},r,\hat{\xi}_{r-})dr+\int_{t}^{T}\sigma^{I} g_{x}(X^{\hat{\xi}}_{r-},r,\hat{\xi}_{r-})dB^{I}\\
&+\int_{t}^{T}\Big[g_{y}(X^{\hat{\xi}}_{r-},r,\hat{\xi}_{r-})-g_{x}(X^{\hat{\xi}}_{r-},r,\hat{\xi}_{r-})\Big]d\hat{\xi}^{c}_{r}\\
&+\sum\limits_{r\in[t,T)}\int_{0}^{\Delta\hat{\xi}_{r}}\Big[g_{y}(X^{\hat{\xi}}_{r-}-u,r,\hat{\xi}_{r-}+u)-g_{x}(X^{\hat{\xi}}_{r-}-u,r,\hat{\xi}_{r-}+u)\Big]du.
\end{align*}
The integration w.r.t $dB^{I}$ on the right-hand side is a martingale under Condition (4). Then, taking expectations on both sides and using Conditions (\ref{v3}), (\ref{v4}) and (3), we obtain 
\begin{align*}
&\mathbb{E}_{x,t,y}g(X^{\hat{\xi}}_{T-},T,\hat{\xi}_{T-})-g(x,t,y)\\
=&\mathbb{E}_{x,t,y}\int_{t}^{T}(\mathcal{A}g)(X^{\hat{\xi}}_{r-},r,\hat{\xi}_{r-})dr+\mathbb{E}_{x,t,y}\int_{t}^{T}\Big[g_{y}(X^{\hat{\xi}}_{r-},r,\hat{\xi}_{r-})-g_{x}(X^{\hat{\xi}}_{r-},r,\hat{\xi}_{r-})\Big]d\hat{\xi}^{c}_{r}\\
&+\mathbb{E}_{x,t,y}\sum\limits_{r\in[t,T)}\int_{0}^{\Delta\hat{\xi}_{r}}\Big[g_{y}(X^{\hat{\xi}}_{r-}-u,r,\hat{\xi}_{r-}+u)-g_{x}(X^{\hat{\xi}}_{r-}-u,r,\hat{\xi}_{r-}+u)\Big]du=0.
\end{align*}Using (\ref{v5}) yields 
\begin{align*}
g(x,t,y)=&\mathbb{E}_{x,t,y}g(X^{\hat{\xi}}_{T-},T,\hat{\xi}_{T-})
=\mathbb{E}_{x,t,y}\Big[X^{\hat{\xi}}_{T-}-(\bar{y}-\hat{\xi}_{T-})1_{\theta^{I}c(T)\le 1}\Big]
=\mathbb{E}_{x,t,y}X^{\hat{\xi}}_{T}.
\end{align*}
Then we have the probabilistic interpretation (\ref{intg}).

{\bf Step 2:} We prove $V^{I}(x,t,y)=J^{I}(x,t,y;\hat{\xi})$. Observe that
\begin{equation*}
\mathcal{A}(g^{2})(x,t,y)-2g(x,t,y)(\mathcal{A}g(x,t,y))=\frac{1}{2}(\sigma^{I})^{2}(x,t,y)g_{x}^{2}(x,t,y),
\end{equation*}
then, using Conditions (\ref{v1}) and (\ref{v3}), we have
\begin{equation}
\label{vt1}
(\mathcal{A}V^{I})(x,t,y)=-\gamma^{I}\mathcal{A}(g^{2})(x,t,y),\forall (x,t,y)\in\overline{W^{\hat{\Xi}}}.
\end{equation}
Similar to {\bf Step 1}, applying It\^{o}-Tanaka-Meyer's formula to $V^{I}$ on $[t, T]$ and taking expectations yield 
\begin{align}
&\mathbb{E}_{x,t,y}V^{I}(X^{\hat{\xi}}_{T},T,\hat{\xi}_{T})-V^{I}(x,t,y)\label{vt2}\\
=&\mathbb{E}_{x,t,y}\int_{t}^{T}(\mathcal{A}V^{I})(X^{\hat{\xi}}_{r-},r,\hat{\xi}_{r-})dr+\mathbb{E}_{x,t,y}\int_{t}^{T}\Big[V^{I}_{y}(X^{\hat{\xi}}_{r-},r,\hat{\xi}_{r-})-V^{I}_{x}(X^{\hat{\xi}}_{r-},r,\hat{\xi}_{r-})\Big]d\hat{\xi}^{c}_{r}\notag\\
&+\mathbb{E}_{x,t,y}\sum\limits_{r\in[t,T]}\int_{0}^{\Delta\hat{\xi}_{r}}\Big[V^{I}_{y}(X^{\hat{\xi}}_{r-}-u,r,\hat{\xi}_{r-}+u)-V^{I}_{x}(X^{\hat{\xi}}_{r-}-u,r,\hat{\xi}_{r-}+u)\Big]du\notag\\
=&-\gamma^{I}\mathbb{E}_{x,t,y}\int_{t}^{T}(\mathcal{A}g^{2})(X^{\hat{\xi}}_{r-},r,\hat{\xi}_{r-})dr-\theta^{I}\mathbb{E}_{x,t,y}\int_{t}^{T}e^{\rho^{I}(T-r)}c(r)d\hat{\xi}^{c}_{r}\notag\\
&-\theta^{I}\mathbb{E}_{x,t,y}\sum\limits_{r\in[t,T]}e^{\rho^{I}(T-r)}c(r)\Delta\hat{\xi}_{r}\notag\\
=&-\gamma^{I}\mathbb{E}_{x,t,y}\int_{t}^{T}(\mathcal{A}g^{2})(X^{\hat{\xi}}_{r-},r,\hat{\xi}_{r-})dr-\theta^{I}\mathbb{E}_{x,t,y}\int_{t}^{T}e^{\rho^{I}(T-r)}c(r)d\hat{\xi}_{r},\notag
\end{align}
where the second equality follows from Conditions (\ref{W}), (\ref{P}), (\ref{v1}) and Eq.(\ref{vt1}),  and the last equality is trivial.
Moreover,
\begin{align}
&\mathbb{E}_{x,t,y}(X^{\hat{\xi}}_{T})^{2}-(\mathbb{E}_{x,t,y}X^{\hat{\xi}}_{T})^{2}\label{vt3}\\
=&\mathbb{E}_{x,t,y}g^{2}(X^{\hat{\xi}}_{T},T,\hat{\xi}_{T})-g^{2}(x,t,y)\notag\\
=&\mathbb{E}_{x,t,y}\int_{t}^{T}(\mathcal{A}g^2)(X^{\hat{\xi}}_{r-},r,\hat{\xi}_{r-})dr\!+\!\mathbb{E}_{x,t,y}\!\int_{t}^{T}\!\Big[(2gg_{y})(X^{\hat{\xi}}_{r-},r,\hat{\xi}_{r-})\!-\!(2gg_{x})(X^{\hat{\xi}}_{r-},r,\hat{\xi}_{r-})\Big]d\hat{\xi}^{c}_{r}\notag\\
&+\sum\limits_{r\in[t,T]}\mathbb{E}_{x,t,y}\int_{0}^{\Delta\hat{\xi}_{r}}\Big[(2gg_{y})(X^{\hat{\xi}}_{r-}\!-\!u,r,\hat{\xi}_{r-}\!+\!u)\!-\!(2gg_{x})(X^{\hat{\xi}}_{r-}\!-\!u,r,\hat{\xi}_{r-}\!+\!u)\Big]du\notag\\
=&\mathbb{E}_{x,t,y}\int_{t}^{T}(\mathcal{A}g^2)(X^{\hat{\xi}}_{r-},r,\hat{\xi}_{r-})dr,\notag
\end{align}
where the first equality follows from (\ref{intg}), the second equality follows from applying It\^{o}-Tanaka-Meyer's formula to $g^{2}$ on $[t, T]$ and taking expectations, and the last equality follows from Condition (\ref{v4}).

Substituting (\ref{vt3}) and (\ref{v2}) into (\ref{vt2}), we obtain
\begin{equation*}
V^{I}(x,t,y)\!=\!\mathbb{E}_{x,t,y}X^{\hat{\xi}}_{T}+\gamma^{I}\var_{x,t,y}X^{\hat{\xi}}_{T}+\theta^{I}\mathbb{E}_{x,t,y}\int_{t}^{T}e^{\rho^{I}(T-r)}c(r)d\hat{\xi}_{r}-(\bar{y}-\hat{\xi}_{T})\max\{1-\theta^{I}c(T),0\}.
\end{equation*}

Observe that if $1-\theta^{I}c(T)>0$ then all reinsurance must be purchased at $T$ and $\hat{\xi}_{T}=\bar{y}$, thus
\begin{equation}
\label{vt4}
V^{I}(x,t,y)=J^{I}(x,t,y;\hat{\xi}).
\end{equation}

{\bf Step 3:} We prove that $\hat{\Xi}$ is the insurer's equilibrium reinsurance law and $V^{I}$ is the corresponding equilibrium value function of $\hat{\Xi}$.

By (\ref{v2}), $\hat{\Xi}$ satisfies Condition (b) of Definition \ref{eqli}. We only prove that $\hat{\Xi}$ satisfies Condition (a) of Definition \ref{eqli}.
Using Definitions of $J^{I}$ and $\xi^{h}$, we have
\begin{align*}
J^{I}(x,t,y;\xi^{h})=&\mathbb{E}_{x,t,y}J^{I}(X^{\eta}_{(t+h)-},t+h,\eta_{(t+h)-};\hat{\xi})+\gamma^{I}\mathbb{E}_{x,t,y}g^{2}(X^{\eta}_{(t+h)-},t+h,\eta_{(t+h)-})\\
&-\gamma^{I}(\mathbb{E}_{x,t,y}g(X^{\eta}_{(t+h)-},t+h,\eta_{(t+h)-}))^{2}+\theta^{I}\mathbb{E}_{x,t,y}\int_{t}^{t+h}e^{\rho^{I}(T-r)}c(r)d\eta^{c}_{r}\\
&+\theta^{I}\mathbb{E}_{x,t,y}\sum\limits_{r\in[t,t+h)}e^{\rho^{I}(T-r)}c(r)\Delta\eta_{r}.
\end{align*} 
Then, using (\ref{vt4}), we have 
\begin{align}
&J^{I}(x,t,y;\xi^{h})-J^{I}(x,t,y;\hat{\xi})\label{vt5}\\
=&\mathbb{E}_{x,t,y}V^{I}(X^{\eta}_{(t+h)-},t+h,\eta_{(t+h)-})-V^{I}(x,t,y)\notag\\
&+\gamma^{I}\Big[\mathbb{E}_{x,t,y}g^{2}(X^{\eta}_{(t+h)-},t+h,\eta_{(t+h)-})-g^{2}(x-\Delta\eta_{t},t,y+\Delta\eta_{t})\Big]\notag\\
&-\gamma^{I}\Big[(\mathbb{E}_{x,t,y}g(X^{\eta}_{(t+h)-},t+h,\eta_{(t+h)-}))^{2}-g^{2}(x-\Delta\eta_{t},t,y+\Delta\eta_{t})\Big]\notag\\
&+\gamma^{I}\var_{x,t,y}g(X^{\eta}_{t},t,\eta_{t})\notag\\
&+\mathbb{E}_{x,t,y}\int_{t}^{t+h}\theta^{I}e^{\rho^{I}(T-r)}c(r)d\eta^{c}_{r}+\mathbb{E}_{x,t,y}\theta^{I}e^{\rho^{I}(T-t)}c(t)\Delta\eta_{t}.\notag
\end{align} 
Applying It\^{o}-Tanaka-Meyer's formula to $V^{I}$, $g^{2}$ and $g$, we have the following estimates.
\begin{align}
&\mathbb{E}_{x,t,y}V^{I}(X^{\eta}_{(t+h)-},t+h,\eta_{(t+h)-})-V^{I}(x,t,y)\label{vt6}\\
=&\mathbb{E}_{x,t,y}\int_{t}^{t+h}(\mathcal{A}V^{I})(X^{\eta}_{r-},r,\eta_{r-})dr+\mathbb{E}_{x,t,y}\int_{t}^{t+h}\Big[V^{I}_{y}(X^{\eta}_{r-},r,\eta_{r-})-V^{I}_{x}(X^{\eta}_{r-},r,\eta_{r-})\Big]d\eta^{c}_{r}\notag\\
&+\mathbb{E}_{x,t,y}\sum\limits_{r\in[t,t+h)}\int_{0}^{\Delta\eta_{r}}\Big[V^{I}_{y}(X^{\eta}_{r-}-u,r,\eta_{r-}+u)-V^{I}_{x}(X^{\eta}_{r-}-u,r,\eta_{r-}+u)\Big]du,\notag
\end{align}
\begin{align}
&\mathbb{E}_{x,t,y}g^{2}(X^{\eta}_{(t+h)-},t+h,\eta_{(t+h)-})-g^{2}(x-\Delta\eta_{t},t,y+\Delta\eta_{t})\label{vt7}\\
=&\mathbb{E}_{x,t,y}\int_{t}^{t+h}(\mathcal{A}g^{2})(X^{\eta}_{r-},r,\eta_{r-})dr\!+\!\mathbb{E}_{x,t,y}\int_{t}^{t+h}\Big[(2gg_{y})(X^{\eta}_{r-},r,\eta_{r-})\!-\!(2gg_{x})(X^{\eta}_{r-},r,\eta_{r-})\Big]d\eta^{c}_{r}\notag\\
&+\mathbb{E}_{x,t,y}\sum\limits_{r\in(t,t+h)}\int_{0}^{\Delta\eta_{r}}\Big[(2gg_{y})(X^{\eta}_{r-}-u,r,\eta_{r-}+u)-(2gg_{x})(X^{\eta}_{r-}-u,r,\eta_{r-}+u)\Big]du,\notag
\end{align}
\begin{align}
&(\mathbb{E}_{x,t,y}g(X^{\eta}_{(t+h)-},t+h,\eta_{(t+h)-}))^{2}-g^{2}(x-\Delta\eta_{t},t,y+\Delta\eta_{t})\label{vt8}\\
=&2\Big[\mathbb{E}_{x,t,y}g(X^{\eta}_{(t+h)-},t+h,\eta_{(t+h)-})-g(x-\Delta\eta_{t},t,y+\Delta\eta_{t})\Big]g(x-\Delta\eta_{t},t,y+\Delta\eta_{t})\notag\\
&+o\Big(\mathbb{E}_{x,t,y}g(X^{\eta}_{(t+h)-},t+h,\eta_{(t+h)-})-g(x-\Delta\eta_{t},t,y+\Delta\eta_{t})\Big)\notag\\
=&2g(x\!-\!\Delta\eta_{t},t,y\!+\!\Delta\eta_{t})\mathbb{E}_{x,t,y}\bigg\{\int_{t}^{t+h}(\mathcal{A}g)(X^{\eta}_{r-},r,\eta_{r-})dr+\int_{t}^{t+h}\Big[g_{y}(X^{\eta}_{r-},r,\eta_{r-})\notag\\&-g_{x}(X^{\eta}_{r-},r,\eta_{r-})\Big]d\eta^{c}_{r}
+\sum\limits_{r\in(t,t+h)}\int_{0}^{\Delta\eta_{r}}\Big[g_{y}(X^{\eta}_{r-}\!-\!u,r,\eta_{r-}\!+\!u)\!-\!g_{x}(X^{\eta}_{r-}\!-\!u,r,\eta_{r-}\!+\!u)\Big]du\bigg\}\notag\\
&+o(h).\notag
\end{align}
Substituting (\ref{vt6})$\sim$(\ref{vt8}) into (\ref{vt5}), we have
\begin{align*}
&J^{I}(x,t,y;\xi^{h})-J^{I}(x,t,y;\hat{\xi})\\
=&\mathbb{E}_{x,t,y}\int_{t}^{t+h}\Big[(\mathcal{A}V^{I})(X^{\eta}_{r-},r,\eta_{r-})+\gamma^{I}(\mathcal{A}g^{2})(X^{\eta}_{r-},r,\eta_{r-})\Big]dr\\
&+\mathbb{E}_{x,t,y}\int_{t}^{t+h}\Big[\theta^{I}e^{\rho^{I}(T-r)}c(r)+V^{I}_{y}(X^{\eta}_{r-},r,\eta_{r-})-V^{I}_{x}(X^{\eta}_{r-},r,\eta_{r-})\Big]d\eta^{c}_{r}\\
&+\mathbb{E}_{x,t,y}\sum\limits_{r\in[t,t+h)}\int_{0}^{\Delta\eta_{r}}\Big[\theta^{I}e^{\rho^{I}(T-r)}c(r)\!+\!V^{I}_{y}(X^{\eta}_{r-}\!-\!u,r,\eta_{r-}\!+\!u)\!-\!V^{I}_{x}(X^{\eta}_{r-}\!-\!u,r,\eta_{r-}\!+\!u)\Big]du\\
&+\gamma^{I}\mathbb{E}_{x,t,y}\int_{t}^{t+h}\Big[(2gg_{y})(X^{\eta}_{r-},r,\eta_{r-})-(2gg_{x})(X^{\eta}_{r-},r,\eta_{r-})\Big]d\eta^{c}_{r}\\
&+\gamma^{I}\mathbb{E}_{x,t,y}\sum\limits_{r\in(t,t+h)}\int_{0}^{\Delta\eta_{r}}\Big[(2gg_{y})(X^{\eta}_{r-}-u,r,\eta_{r-}+u)-(2gg_{x})(X^{\eta}_{r-}-u,r,\eta_{r-}+u)\Big]du\\
&-2\gamma^{I}g(x\!-\!\Delta\eta_{t},t,y\!+\!\Delta\eta_{t})\mathbb{E}_{x,t,y}\bigg\{\int_{t}^{t+h}(\mathcal{A}g)(X^{\eta}_{r-},r,\eta_{r-})dr+\int_{t}^{t+h}\Big[g_{y}(X^{\eta}_{r-},r,\eta_{r-})\\
&-g_{x}(X^{\eta}_{r-},r,\eta_{r-})\Big]d\eta^{c}_{r}\!+\!\sum\limits_{r\in(t,t+h)}\!\int_{0}^{\Delta\eta_{r}}\!\Big[g_{y}(X^{\eta}_{r-}\!-\!u,r,\eta_{r-}\!+\!u)\!-\!g_{x}(X^{\eta}_{r-}\!-\!u,r,\eta_{r-}\!+\!u)\Big]du\bigg\}
\\
&+\gamma^{I}\var_{x,t,y}g(X^{\eta}_{t},t,\eta_{t})+o(h)
\\
\ge&2\gamma^{I}\mathbb{E}_{x,t,y}\bigg\{\int_{t}^{t+h}\Big[g(X^{\eta}_{r-},r,\eta_{r-})-g(x\!-\!\Delta\eta_{t},t,y\!+\!\Delta\eta_{t})\Big](\mathcal{A}g)(X^{\eta}_{r-},r,\eta_{r-})dr\\
&+\int_{t}^{t+h}\Big[g(X^{\eta}_{r-},r,\eta_{r-})\!-\!g(x\!-\!\Delta\eta_{t},t,y\!+\!\Delta\eta_{t})\Big]\!\Big[g_{y}(X^{\eta}_{r-},r,\eta_{r-})\!-\!g_{x}(X^{\eta}_{r-},r,\eta_{r-})\Big]d\eta^{c}_{r}\\
&+\!\sum\limits_{r\in(t,t+h)}\!\int_{0}^{\Delta\eta_{r}}\!\Big[g(X^{\eta}_{r-}\!-\!u,r,\eta_{r-}\!+\!u)\!-\!g(x\!-\!\Delta\eta_{t},t,y\!+\!\Delta\eta_{t})\Big]\!\Big[g_{y}(X^{\eta}_{r-}\!-\!u,r,\eta_{r-}\!+\!u)\\
&\!-\!g_{x}(X^{\eta}_{r-}\!-\!u,r,\eta_{r-}\!+\!u)\Big]du\bigg\}\\
&+o(h),
\end{align*}
where the inequality follows from Conditions (\ref{v1}) and (3)  as well as the non-negativity of variance. Then, using the dominated convergence theorem, we have
\begin{equation*}
\liminf\limits_{h\rightarrow 0}\frac{J^{I}(x,t,y;\xi^{h})-J^{I}(x,t,y;\hat{\xi})}{h}\ge 0.
\end{equation*}
Thus the proof is complete.
\end{proof}

Now we manage to find a solution to extended HJB equations (\ref{v1})$\sim$(\ref{v5}).

\subsection{ A solution to the extended HJB equations}
\label{subsec-verif}
In this subsection, we obtain a solution to the extended HJB equations by making an ansatz of the solution structure. Though the solution does not satisfy the regularity condition (1) of Theorem \ref{verif-insurer}, we show that the verification proof is still valid and we indeed obtain an insurer's equilibrium reinsurance law.

Note that the optimal strategy at $T$ either purchases all possible reinsurance or does not purchase at all, depending on whether $\theta^{I}c(T)\le 1$. We conjecture that the division $(W^{\hat{\Xi}},P^{\hat{\Xi}})$ constrained on $y<\bar{y}$ depends only on time. We make an ansatz that
\begin{align*}
&V^{I}(x,t,y)=u_{1}(t)x+u_{2}(t)y+u_{3}(t),\\
&g(x,t,y)=v_{1}(t)x+v_{2}(t)y+v_{3}(t).
\end{align*}
Then the division $\hat{\Xi}=(W^{\hat{\Xi}},P^{\hat{\Xi}})$ given by (\ref{W}) and (\ref{P}) becomes
\begin{align*}
&W^{\hat{\Xi}}=\Big\{(x,t,y)\in\mathcal{Q}\Big|\theta^{I}e^{\rho^{I}(T-t)}c(t)-u_{1}(t)+u_{2}(t)>0\ {\rm or}\ y=\bar{y}\Big\},\\
&P^{\hat{\Xi}}=\Big\{(x,t,y)\in\mathcal{Q}\Big|\theta^{I}e^{\rho^{I}(T-t)}c(t)-u_{1}(t)+u_{2}(t)=0,y<\bar{y}\Big\}.
\end{align*}
The extended HJB equations (\ref{v1})$\sim$(\ref{v5}) become
\begin{align}
&\min\left\{u_{1}'(t)x+u_{2}'(t)y+u_{3}'(t)+(a^{I}-b^{I}x)u_{1}(t)+\frac{\gamma^{I}}{2}(\sigma^{I})^{2}v_{1}(t)^{2}\right. ,\label{vs1}\\
&\qquad\quad\left.\theta^{I}e^{\rho^{I}(T-t)}c(t)-u_{1}(t)+u_{2}(t)\right \}=0,\notag\\
&v_{1}'(t)x+v_{2}'(t)y+v_{3}'(t)+(a^{I}-b^{I}x)v_{1}(t)=0,\forall (x,t,y)\in \overline{W^{\hat{\Xi}}},\label{vs2}\\
&v_{1}(t)-v_{2}(t)=0, \forall (x,t,y)\in P^{\hat{\Xi}}\label{vs3}
\end{align}
with terminal conditions
\begin{align}
&u_{1}(T)=1,\label{vs4}\\
&u_{2}(T)=\max\{1-\theta^{I}c(T),0\},\label{vs5}\\
&u_{3}(T)=-\bar{y}\max\{1-\theta^{I}c(T),0\},\label{vs6}\\
&v_{1}(T)=1,\label{vs7}\\
&v_{2}(T)=1_{\theta^{I}c(T)\le 1},\label{vs8}\\
&v_{3}(T)=-\bar{y}1_{\theta^{I}c(T)\le 1}.\label{vs9}
\end{align}
Using (\ref{vs4}), (\ref{vs7}) and the arbitrariness of $x$ in (\ref{vs1}) and (\ref{vs2}), we have
\begin{align*}
&u_{1}(t)=e^{-b^{I}(T-t)},\\
&v_{1}(t)=e^{-b^{I}(T-t)}.
\end{align*}
We have used the fact that for any $(x,t,y)\in P^{\hat{\Xi}}$, $(x,t,\bar{y})$ is in $W^{\hat{\Xi}}$ to obtain the expression of $v_{1}$ on $P^{\hat{\Xi}}$. Then, from (\ref{vs1}),  for $t$ such that $u_{2}(t)>e^{-b^{I}(T-t)}-\theta^{I}e^{\rho^{I}(T-t)}c(t)$, we have
\begin{equation*}
u_{2}'(t)y+u_{3}'(t)+a^{I}e^{-b^{I}(T-t)}+\frac{\gamma^{I}}{2}(\sigma^{I})^{2}v_{1}(t)^{2}=0.
\end{equation*}
Using the arbitrariness of $y$ in the last equation and (\ref{vs5}) yields
\begin{equation*}
u_{2}(t)=\max\big\{e^{-b^{I}(T-t)}-\theta^{I}e^{\rho^{I}(T-t)}c(t),0\big\},
\end{equation*}
then we have that for $y<\bar{y}$, the point $(x,t,y)$ belongs to $W^{\hat{\Xi}}$ if and only if $c(t)>\kappa(t)$ where the threshold function $\kappa:[0,T]\mapsto\mathbb{R}$ is given by
\begin{equation}
\label{threshold}
\kappa(t):=\frac{1}{\theta^{I}}e^{-(b^{I}+\rho^{I})(T-t)},\forall t\in[0,T].
\end{equation}
To solve $v_{2}$, by (\ref{vs2}) and (\ref{vs3}), we have $v_{2}=v_{1}$ on $P^{\hat{\Xi}}$ and $v_{2}'=0$ on $\overline{W^{\hat{\Xi}}}$. In addition, we observe that for $(x,t,y)\in W^{\hat{\Xi}}$, the purchase takes place the first time the state-time-control triple enters the purchasing region. Hence the dependence on $y$ should be the same as that at the first entering time, i.e.,
\begin{equation*}
v_{2}(t)=e^{-b^{I}(T-p(t;c))},
\end{equation*}
where $p(t;c)$ is the first entering time (entering $P^{\hat{\Xi}}$) after $t$ given reinsurer's premium strategy $c(\cdot)$ and is set to infinity if there is no such time, i.e.,
\begin{equation}
\label{ptc}
p(t;c):=\left\{
\begin{array}{l}
\min\big\{\tau\in[t,T]\big|c(\tau)\le \kappa(\tau)\big\}, \ {\rm if\ such}\ \tau\ {\rm exists}.\\
+\infty,\ {\rm else}.
\end{array}
\right.
\end{equation}
The function $v_{2}$ solved above clearly satisfies the terminal condition (\ref{vs8}). To solve $u_{3}$, we have from (\ref{vs1}) that $u_{3}'(t)+a^{I}e^{-b^{I}(T-t)}+\frac{\gamma^{I}}{2}(\sigma^{I})^{2}e^{-2b^{I}(T-t)}=0$ on $\overline{W^{\hat{\Xi}}}$. In addition, we observe that for any $(x,t,y)\in P^{\hat{\Xi}}$, $V(x,t,y)=V(x+y-\bar{y},t,\bar{y})+\theta^{I}e^{\rho^{I}(T-t)}c(t)(\bar{y}-y)$ with $(x+y-\bar{y},t,\bar{y})\in W^{\hat{\Xi}}$. Denote
\begin{align*}
&u_{3}^{W^{\hat{\Xi}}}:=u_{3}\big|_{\{t|c(t)>\kappa(t)\}},\\
&u_{3}^{P^{\hat{\Xi}}}:=u_{3}\big|_{\{t|c(t)\le\kappa(t)\}}.
\end{align*}
Then the last observation implies that $u_{3}^{P^{\hat{\Xi}}}(t)=u_{3}^{W^{\hat{\Xi}}}(t)+[\theta^{I}e^{\rho^{I}(T-t)}c(t)-e^{-b^{I}(T-t)}]\bar{y}$. With terminal condition (\ref{vs6}), we obtain
\begin{equation*}
u_{3}(t)=-u_{2}(t)\bar{y}+\frac{a^{I}}{b^{I}}[1-e^{-b^{I}(T-t)}]+\frac{\gamma^{I}(\sigma^{I})^{2}}{4b^{I}}[1-e^{-2b^{I}(T-t)}].
\end{equation*}
Similarly, from (\ref{vs2}), we have $v_{3}'(t)+a^{I}e^{-b^{I}(T-t)}=0$. Using the terminal condition (\ref{vs9}) and the fact that $g(x,t,y)=g(x+y-\bar{y},t,\bar{y}),\forall(x,t,y)\in P^{\hat{\Xi}}$ with $(x+y-\bar{y},t,\bar{y})\in W^{\hat{\Xi}}$, we have
\begin{equation*}
v_{3}(t)=-v_{2}(t)\bar{y}+\frac{a^{I}}{b^{I}}[1-e^{-b^{I}(T-t)}].
\end{equation*}

Combining the above results gives a solution $(V^{I}, g)$ to the extended HJB equations. The solution $(V^{I}, g)$ is given by
\begin{align}
V^{I}(x,t,y)=&e^{-b^{I}(T-t)}x-\max\big\{e^{-b^{I}(T-t)}-\theta^{I}e^{\rho^{I}(T-t)}c(t),0\big\}(\bar{y}-y)\notag\\&+\frac{a^{I}}{b^{I}}\big[1-e^{-b^{I}(T-t)}\big]+\frac{\gamma^{I}(\sigma^{I})^{2}}{4b^{I}}\big[1-e^{-2b^{I}(T-t)}\big],\label{VI}\\
g(x,t,y)=&e^{-b^{I}(T-t)}x-e^{-b^{I}(T-p(t))}(\bar{y}-y)+\frac{a^{I}}{b^{I}}\big[1-e^{-b^{I}(T-t)}\big].\label{g}
\end{align}

We end this section with the following theorem, which collects the main results of this section.

\begin{theorem}
\label{theo-ins}
An insurer's equilibrium reinsurance law is $\hat{\Xi}=(W^{\hat{\Xi}},P^{\hat{\Xi}})$ where
\begin{align}
&W^{\hat{\Xi}}=\mathbb{R}\times\{t|c(t)>\kappa(t)\}\times[0,\bar{y})\bigcup\mathbb{R}\times[0,T]\times\{\bar{y}\},\label{W-prim}\\
&P^{\hat{\Xi}}=\mathbb{R}\times\{t|c(t)\le \kappa(t)\}\times[0,\bar{y})\label{P-prim}
\end{align}
with $\kappa$ given by (\ref{threshold}). Moreover, the reinsurance strategy generated by $\hat{\Xi}$ through (\ref{dynamic1}) is the insurer's equilibrium reinsurance strategy.
\end{theorem}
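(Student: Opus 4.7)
The plan is to verify that the candidate triple $(V^I, g, \hat{\Xi})$ constructed in Subsection \ref{subsec-verif} meets every hypothesis of Theorem \ref{verif-insurer}. By the way $V^I$ and $g$ were synthesized—substituting the affine ansatz into (\ref{v1})-(\ref{v5}) and solving the scalar system (\ref{vs1})-(\ref{vs9})—they automatically satisfy the extended HJB system on the regions (\ref{W-prim})-(\ref{P-prim}): inside $W^{\hat\Xi}$ one has $\mathcal{A}V^I+\tfrac{\gamma^I}{2}(\sigma^I)^2 g_x^2=0$ and $\theta^I e^{\rho^I(T-t)}c(t)-V^I_x+V^I_y\ge 0$, while on $P^{\hat\Xi}$ the matched equality and $g_x=g_y$ hold; the terminal data (\ref{v2}) and (\ref{v5}) come from (\ref{vs4})-(\ref{vs9}) and the explicit formulas for $u_2,u_3,v_2,v_3$.

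Next I would verify that $\hat{\Xi}$ is an admissible reinsurance law in the sense of Definition \ref{adl}. The Skorohod reflection problem (\ref{dynamic1}) admits the explicit strong solution whose control performs a single lump purchase $\bar y-y$ the first time $c$ crosses the threshold $\kappa$ of (\ref{threshold}), at the time $p(t;c)$ defined in (\ref{ptc}), after which the state evolves as the controlled OU process with no further action; Assumption \ref{assump} is precisely what fixes the boundary behaviour on $\{c(t)=\kappa(t)\}$ and at $t=T$. Condition (b) of Definition \ref{ads} is immediate since $X^{\hat\xi}_T$ is Gaussian with uniformly bounded moments and the premium integral collapses to the single term $\theta^I e^{\rho^I(T-p(t;c))}c(p(t;c))(\bar y-y)$, while condition (c) is trivial because at most one jump occurs on $[t,T]$.

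The main technical obstacle is that $V^I$ and $g$ fail the global regularity hypothesis (1) of Theorem \ref{verif-insurer}: the coefficient $u_2(t)=\max\{e^{-b^I(T-t)}-\theta^I e^{\rho^I(T-t)}c(t),0\}$ is only Lipschitz in $t$ at zeros of $c-\kappa$, and $v_2(t)=e^{-b^I(T-p(t;c))}$ is genuinely discontinuous across $p(\cdot;c)$. Because the ansatz is affine in $(x,y)$, however, the $x$- and $y$-partial derivatives are bounded deterministic functions of $t$, so I would rerun the It\^o-Tanaka-Meyer computations from the proof of Theorem \ref{verif-insurer} separately on the two sub-intervals $[t,p(t;c))$ and $[p(t;c),T]$, on each of which the time coefficients are smooth. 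The jumps that $V^I$ and $g$ suffer across $p(t;c)$ must then be shown to be precisely cancelled by the contributions generated by $\Delta\hat\xi_{p(t;c)}=\bar y-y$ via the identities $V^I_x-V^I_y=\theta^I e^{\rho^I(T-\cdot)}c(\cdot)$ and $g_x=g_y$ valid on $P^{\hat\Xi}$; this bookkeeping at the single discontinuity time is the delicate step.

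Finally, hypothesis (4) of Theorem \ref{verif-insurer} is routine: $V^I_x$ and $g_x$ are bounded deterministic functions of $t$ and $(g^2)_x$ is affine in $X^{\xi}_r$, so standard $L^2$ moment bounds for the singularly controlled OU process $X^\xi$ give the required square integrability for every $\xi\in\mathcal{D}_{[t,T]}$. Combining the HJB verification, the admissibility of $\hat\Xi$, the adapted It\^o-Tanaka-Meyer argument and the integrability estimates invokes Theorem \ref{verif-insurer} to conclude that $\hat{\Xi}$ is an insurer's equilibrium reinsurance law, and by Definition \ref{eqli} the strategy generated by $\hat\Xi$ through (\ref{dynamic1}) is an insurer's equilibrium reinsurance strategy, yielding Theorem \ref{theo-ins}.
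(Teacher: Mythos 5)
Your proposal is correct and follows essentially the same route as the paper: verify the hypotheses of Theorem \ref{verif-insurer} for the explicit affine pair $(V^{I},g)$, note that global $C^{2,1,1}$ regularity fails, and salvage the verification by exploiting the piecewise smoothness on $\overline{W^{\hat\Xi}}$ and $P^{\hat\Xi}$ together with the linear structure of the ansatz (your splitting of the It\^o--Tanaka--Meyer computation at $p(t;c)$ and the cancellation via $g_x=g_y$ and $V^I_x-V^I_y=\theta^I e^{\rho^I(T-\cdot)}c(\cdot)$ on $P^{\hat\Xi}$ is exactly the bookkeeping the paper's terse remark leaves implicit). No substantive gap.
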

\begin{proof}
The proof boils down to verifying that $(V^{I},g)$ given by (\ref{VI}) and (\ref{g}) validate the proof of Theorem \ref{verif-insurer}. 

Unfortunately, $(V^{I},g)$ does not satisfy Condition (1) of Theorem \ref{verif-insurer}. However, we will show that the proof in Theorem \ref{verif-insurer} is still valid.

Specifically, the proofs of {\bf Step 1} and {\bf Step 2} in Theorem \ref{verif-insurer} only require that
\begin{align*}
		&V^{I}\in C^{2,1,1}(\overline{W^{\hat{\Xi}}})\bigcap C^{1,0,1}(P^{\hat{\Xi}}),\\
		&g\in C^{2,1,1}(\overline{W^{\hat{\Xi}}})\bigcap C^{1,0,1}(P^{\hat{\Xi}}).
	\end{align*}
 In {\bf Step 3}, Condition (1) validates the use of  It\^{o}-Tanaka-Meyer's formula in (\ref{vt6})$\sim$(\ref{vt8}) as the state-time-control triple moves from $(x-\Delta\eta_{t},t,y+\Delta\eta_{t})$ to $(X^{\eta}_{(t+h)-},t+h,\eta_{(t+h)-})$. By the form of $(W^{\hat{\Xi}}, P^{\hat{\Xi}})$, we deduce that the path is within $W^{\hat{\Xi}}$ or $P^{\hat{\Xi}}$ for $h$ sufficiently small. Hence the It\^{o}-Tanaka-Meyer's formula is still valid.

By (\ref{W-prim})$\sim$(\ref{P-prim}) and the linear structures of $V^{I}$ and $g$ given in (\ref{VI})$\sim$(\ref{g}), one can directly verify that $(V^{I},g)$ satisfies Conditions (3)$\sim$(4) of Theorem \ref{verif-insurer}. Then the proof follows using Theorem \ref{verif-insurer}.
 \end{proof}
With a little abuse of notation, we refer to the time regions when $y<\bar{y}$ when we mention the waiting and the purchasing regions in the rest of the paper. Moreover, we refer to the divisions of $[0, T]$ when we mention reinsurance laws. And we write $\hat{\Xi}=(W^{\hat{\Xi}},P^{\hat{\Xi}})$ where
\begin{align}
&W^{\hat{\Xi}}=\{t|c(t)>\kappa(t)\},\label{W-simple}\\
&P^{\hat{\Xi}}=\{t|c(t)\le \kappa(t)\}.\label{P-simple}
\end{align}

\section{The Reinsurer's Problem}
\label{sec-reinsurer}

This section aims to solve the reinsurer's problem. In Subsection \ref{subsec-eec}, we deduce the equivalent equilibrium conditions to those given in Definition \ref{eqps}. As a byproduct, we find that the reinsurer's problem can be transformed into a time-selection problem, whose equilibrium can be defined similarly to time-inconsistent stopping problems. Further, we find that the equilibrium conditions are equivalent to those of the time-selection problem. In Subsection \ref{subsec-edr}, we apply the equivalent equilibrium conditions to obtain the reinsurer's equilibrium strategy. The reinsurer's equilibrium strategy can be explicitly derived in different regions of additionally introduced parameters.

\subsection{Equivalent equilibrium conditions}
\label{subsec-eec}
In this subsection, we analyze the equilibrium conditions of Definition \ref{eqps} and find their equivalent counterparts. We also transfer the reinsurer's problem to a time-selection problem with equivalent equilibrium conditions.

According to Theorem \ref{theo-ins}, the insurer's equilibrium reinsurance strategy is to purchase the amount of $\bar{y}-y$ immediately the first time $c(t)\le \kappa(t)$, i.e., the purchasing time given by $p(t;c)$ in (\ref{ptc}). For the reinsurer to maximize his profit, the premium at $p(t;c)$ should exactly be set to $\kappa(p(t;c))$. Then the dependence of $J^{R}$ on $c$ is only through $p(\cdot;c)$. Note that there is no dependence of $J^{R}$ on $x$, then there exists function $K(t,y,z,p)$ such that
\begin{equation*}
J^{R}(x,t,y,z;c)=K\big(t,y,z,p(t;c)\big).
\end{equation*}

At this point, one can observe that the reinsurer's problem is transformed into a time-selection problem, which is time-inconsistent and similar to time-inconsistent stopping problems. We introduce the following definitions.

\begin{definition}[the reinsurer's time-selection problem and its equilibrium selected time]
We call the following problem the reinsurer's time-selection problem:
\begin{algorithm}[ht]
\floatname{algorithm}{Problem}
\caption{The reinsurer's time-selection problem}
Given any initial $(t,y,z)$, the reinsurer chooses a time $p(t;c)\in[t, T]\bigcup\{\infty\}$ at which his risk exposure increases by $\bar{y}-y$ and he obtains a premium of $(\bar{y}-y)\kappa\big(p(t;c)\big)$. His objective is to minimize $K\big(t,y,z,p(t;c)\big)$ over all choices of $p(t;c)$.
\end{algorithm}
\end{definition} \label{eqts}
We can define the equilibrium of the time-selection problem in a way similar to time-inconsistent stopping problems.
\begin{definition}[equilibrium selected time]
We call $p(t;c)$ an equilibrium selected time of the reinsurer's time-selection problem if the following two conditions hold.\\
(a) The selected time is no worse than now, i.e.,
\begin{equation*}
K(t,y,z,t)\ge K\big(t,y,z,p(t;c)\big),\forall (t,y,z)\in[0,T]\times[0,\bar{y}]\times\mathbb{R}.
\end{equation*}
(b) For any $h>0$, define the perturbed time $p^{h}(t;c)$ where $[t,t+h)$ cannot be selected, i.e.,
\begin{equation*}
p^{h}(t;c)=\max\{p(t;c),t+h\}.
\end{equation*}
Then it holds that
\begin{equation*}
\lim\limits_{h\rightarrow 0}\frac{K(t,y,z,p^{h}(t;c))-K\big(t,y,z,p(t;c)\big)}{h}\ge 0.
\end{equation*}
\end{definition}

We will later show that the equilibrium conditions in Definition \ref{eqps} are equivalent to the equilibrium conditions in Definition \ref{eqts}. Now let us focus back on the reinsurer's problem and transform the equilibrium conditions in Definition \ref{eqps} to the arguments on $K$. For notational simplicity, we introduce the following definition.
\begin{definition}
Suppose that $(W^{\hat{\Xi}}, P^{\hat{\Xi}})$ is a division of $[0,T]$ where $W^{\hat{\Xi}}$ and $P^{\hat{\Xi}}$ denote the waiting and purchasing regions respectively.\\
(a) We call $(t_{1},t_{2})\subset W^{\hat{\Xi}}$ a sub waiting region if for any $\epsilon\ge 0$, both $t_{1}-\epsilon$ and $t_{2}+\epsilon$ are in $P^{\hat{\Xi}}$.\\
(b) We call $(t_{1},T]\subset W^{\hat{\Xi}}$ the last sub waiting region if for any $\epsilon\ge 0$, $t_{1}-\epsilon$ is in $P^{\hat{\Xi}}$. Similarly, we can define the first waiting region $[0,t_{2})$. It should be noted that both the last and the first waiting regions may not exist.\\
(c) We call $[t_{1},t_{2}]\subset P^{\hat{\Xi}}$ a sub purchasing region if for any $\epsilon>0$, both $t_{1}-\epsilon$ and $t_{2}+\epsilon$ are in $W^{\hat{\Xi}}$. We call a sub purchasing region $[t_{1},t_{2}]$ a last sub purchasing region if $t_{2}=T$.
\end{definition}

Then, by Definition \ref{eqps}, we deduce the following equivalent conditions. The proof relies on discussing different cases of perturbations and is straightforward. We omit the proof here.

\begin{lemma}
\label{equil-lemma}
Let $\hat{c}$ be an admissible premium strategy and $\hat{\Xi}=(W^{\hat{\Xi}}, P^{\hat{\Xi}})$ be the insurer's equilibrium reinsurance law given $\hat{c}$, which is given by (\ref{W-simple})$\sim$(\ref{P-simple}). Then $\hat{c}$ is an equilibrium premium strategy if and only if the following conditions hold.\\
(a) For any $t$ in any sub waiting region $(t_{1},t_{2})$ of $W^{\hat{\Xi}}$, it holds that for any $\delta(t,h)\in[0,h]$ and $(y,z)\in[0,\bar{y})\times\mathbb{R}$,
\begin{equation*}
\liminf\limits_{h\rightarrow 0}\frac{K\big(t,y,z,t+\delta(t,h)\big)-K\big(t,y,z,p(t;\hat{c})\big)}{h}\ge 0.
\end{equation*}
(b) For any $t$ in a last sub waiting region $(t_{1},T]$ of $W^{\hat{\Xi}}$, it holds that for any $\delta(t,h)\in[0,h]$, $(y,z)\in[0,\bar{y})\times\mathbb{R}$ and $t\in(t_{1},T)$,
\begin{equation*}
\liminf\limits_{h\rightarrow 0}\frac{K\big(t,y,z,t+\delta(t,h)\big)-K\big(t,y,z,p(t;\hat{c})\big)}{h}\ge 0,
\end{equation*}
and
\begin{equation*}
K(T,y,z,T)-K(T,y,z,\infty)\ge 0.
\end{equation*}
(c) For any $t$ in any sub purchasing region $[t_{1},t_{2}]$ of $P^{\hat{\Xi}}$ with $t_{2}<T$, it holds that for any  $\delta(t,h)\in[0,h]$ and $(y,z)\in[0,\bar{y})\times\mathbb{R}$,
\begin{equation*}
\liminf\limits_{h\rightarrow 0}\frac{K\big(t,y,z,t+\delta(t,h)\big)-K(t,y,z,t)}{h}\ge 0,
\end{equation*}
and
\begin{equation*}
K(t_{2},y,z,t_{2})\le K\big(t_{2},y,z,q(t_{2};\hat{c})\big),
\end{equation*}
where
\begin{equation*}
q(t;c):=\left\{
\begin{array}{l}
\min\big\{\tau\in(t,T]\big|c(\tau)\le \kappa(\tau)\big\}, \ {\rm if\ such}\ \tau\ {\rm exists},\\
+\infty,\ {\rm else}.
\end{array}
\right.
\end{equation*}
(d) For any $t$ in a last sub purchasing region $[t_{1},T]$ of $P^{\hat{\Xi}}$, it holds that for any  $\delta(t,h)\in[0,h]$, $(y,z)\in[0,\bar{y})\times\mathbb{R}$ and $t\in[t_{1},T)$,
\begin{equation*}
\liminf\limits_{h\rightarrow 0}\frac{K\big(t,y,z,t+\delta(t,h)\big)-K(t,y,z,t)}{h}\ge 0.
\end{equation*}
And it holds that
\begin{equation*}
K(T,y,z,T)\le K(T,y,z,\infty).
\end{equation*}
\end{lemma}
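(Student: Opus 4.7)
The plan is to exploit the reduction $J^R(x,t,y,z;c) = K(t,y,z,p(t;c))$, which collapses the dependence on the premium strategy to the single scalar $p(t;c)$. The perturbation condition in Definition \ref{eqps}(a) then becomes a statement about how $p(t;c^h)$ differs from $p(t;\hat c)$. Inspecting definition (\ref{ptc}) shows that an admissible perturbation $\tilde c$ on $[t,t+h)$ produces $p(t;c^h)$ in one of two alternatives: either $\tilde c(\tau) > \kappa(\tau)$ throughout $[t,t+h)$, in which case $p(t;c^h) = \min\{\tau \ge t+h : \hat c(\tau) \le \kappa(\tau)\}$; or $\tilde c$ first meets $\kappa$ at some $\tau_0 = t + \delta(t,h) \in [t,t+h)$, giving $p(t;c^h) = t + \delta(t,h)$. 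Every $\delta(t,h) \in [0,h)$ is realizable by an explicit choice of $\tilde c$, so the $\liminf$ inequality must be required uniformly over all such $\delta$ as well as for the no-crossing outcome.

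I would then carry out a case analysis based on the type of region of $\hat\Xi=(W^{\hat\Xi},P^{\hat\Xi})$ containing the base point $t$. In a sub waiting region $(t_1,t_2)$, one has $p(t;\hat c) = t_2$ for $t\in(t_1,t_2)$; the no-crossing perturbation leaves $p(t;c^h)$ unchanged, so only the crossing case is nontrivial and yields exactly condition (a). In a last sub waiting region $(t_1,T]$, the interior argument is identical with $p(t;\hat c)=\infty$, giving the first half of (b); the terminal inequality in (b) comes from Definition \ref{eqps}(b), where only a single-point perturbation at $T$ is available and the relevant alternative $\tilde c(T)\le \kappa(T)$ forces $K(T,y,z,T) \ge K(T,y,z,\infty)$.

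In a sub purchasing region $[t_1,t_2]$ with $t_2<T$, one has $p(t;\hat c) = t$ for $t \in [t_1,t_2)$; for $h$ small enough that $t+h \le t_2$, both perturbation alternatives yield $p(t;c^h) \in [t,t+h]$ (in the no-crossing case $p(t;c^h) = t + h$ because $\hat c(t+h) \le \kappa(t+h)$), and the $\liminf$ statement in (c) follows. At the right endpoint $t = t_2$, however, the no-crossing perturbation pushes $p(t_2;c^h)$ all the way to $q(t_2;\hat c)$, which is bounded away from $t_2$; a nonvanishing numerator over $h \to 0$ then forces the pointwise inequality $K(t_2,y,z,t_2) \le K(t_2,y,z,q(t_2;\hat c))$ recorded in (c). The last sub purchasing region $[t_1,T]$ is treated analogously on $[t_1,T)$, while the terminal inequality of (d) follows from Definition \ref{eqps}(b) by the single-point perturbation $\tilde c(T) > \kappa(T)$.

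The main obstacle I expect is the careful handling of the sub-region boundaries: at $t = t_2$ of a sub purchasing region, and analogously at the terminal time $T$ in last sub regions, the map sending the perturbation to $p(t;c^h)$ is discontinuous, so a vanishing denominator $h$ is paired with a finite-size numerator such as $K(t_2,y,z,q(t_2;\hat c)) - K(t_2,y,z,t_2)$. This is precisely what converts the $\liminf$ requirement into a sign condition on $K$ itself rather than an infinitesimal one, and it must be extracted without conflating it with the interior bound. Once these boundary subtleties are settled, sufficiency is immediate: any admissible perturbation falls into one of the enumerated alternatives, so conditions (a)--(d) jointly imply Definition \ref{eqps}(a)--(b), completing the equivalence.
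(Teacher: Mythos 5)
Your proposal is correct and follows exactly the route the paper indicates for this lemma --- a case analysis of how the perturbation $\tilde c$ on $[t,t+h)$ moves the purchasing time $p(t;c^{h})$, with the region boundaries and the terminal time producing the pointwise sign conditions on $K$; the paper itself omits the proof as ``straightforward.'' Your handling of the discontinuity at $t=t_{2}$ (a non-vanishing numerator $K(t_{2},y,z,q(t_{2};\hat c))-K(t_{2},y,z,t_{2})$ over $h\to 0$) is precisely the mechanism that converts the $\liminf$ requirement into the stated inequalities rather than infinitesimal ones.
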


We observe that Lemma \ref{equil-lemma} transfers all equilibrium conditions on $\hat{c}$ to conditions on $(W^{\hat{\Xi}}, P^{\hat{\Xi}})$ only, where we notice that $p(t;\hat{c})$ is given by (\ref{ptc}). Hence, we introduce the following useful definition.

\begin{definition}[reinsurer's equilibrium reinsurance law]
We call a division $(W^{\hat{\Xi}}, P^{\hat{\Xi}})$ of $[0, T]$ a reinsurer's equilibrium reinsurance law if it satisfies all conditions in Lemma \ref{equil-lemma}.
\end{definition}

With any reinsurer's equilibrium reinsurance law $(W^{\hat{\Xi}}, P^{\hat{\Xi}})$, we can generate equilibrium premium strategies as follows. 
\begin{equation}
\label{equil-premium}
\hat{c}(t)=\left\{
\begin{array}{l}
\kappa(t), \ t\in P^{\hat{\Xi}},\\
{\rm any\ value\ strictly\ larger\ than\ } \kappa(t), \ t\in W^{\hat{\Xi}}.
\end{array}
\right.
\end{equation}
It is implied that there are many equilibrium premium strategies. Specifically, the high price in the waiting region is not unique according to (\ref{equil-premium}).

To further apply Lemma \ref{equil-lemma}, we solve the function $K$ explicitly using It\^{o} calculus.
\begin{proposition} \label{P46}
The function $K$ is given by
\begin{align}
K(t,y,z,p)=&ze^{-b^{R}(T-t)}+\frac{a^{R}}{b^{R}}\Big[1-e^{-b^{R}(T-t)}\Big]+\gamma^{R}\frac{(\sigma^{R})^{2}}{2b^{R}}\Big[1-e^{-2b^{R}(T-t)}\Big]\notag\\
&+\bigg\{\Big[e^{-b^{R}(T-p)}-\frac{\theta^{R}}{\theta^{I}}e^{(-b^{I}+\rho^{R}-\rho^{I})(T-p)}\Big](\bar{y}-y)\bigg\}1_{p<\infty}.\label{K}
\end{align}
\end{proposition}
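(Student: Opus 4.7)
\bigskip

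\noindent\textbf{Proof proposal for Proposition \ref{P46}.}

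The plan is to exploit the explicit form of the insurer's equilibrium reinsurance strategy (Theorem \ref{theo-ins}) to reduce $J^{R}$ to a deterministic computation, then apply the OU solution formula to $Z^{\xi}$. By Theorem \ref{theo-ins}, under the insurer's equilibrium law $\hat{\Xi}=(W^{\hat{\Xi}},P^{\hat{\Xi}})$ generated by $c$, the induced reinsurance strategy $\xi=\xi^{x,t,y,\hat{\Xi}}$ is piecewise constant: it equals $y$ on $[t,p(t;c))$ and jumps at time $p=p(t;c)$ by $\bar{y}-y$ (with the terminal behavior at $p=T$ handled by Assumption \ref{assump}); if $p=\infty$, then $\xi\equiv y$ on $[t,T]$. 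Moreover, since the reinsurer's premium strategy entering $J^{R}$ is an equilibrium premium, at a purchasing time $p<\infty$ we must have $c(p)=\kappa(p)=\tfrac{1}{\theta^{I}}e^{-(b^{I}+\rho^{I})(T-p)}$; otherwise the reinsurer could strictly increase profit by raising the premium at the single purchasing instant. This reduces $J^{R}$ to a function of $(t,y,z,p)$ only (no $x$ dependence), which is precisely $K(t,y,z,p)$.

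Next, I would write out the strong solution of the controlled OU dynamics for $Z^{\xi}$: for $r\in[t,T]$,
\begin{equation*}
Z^{\xi}_{r}=e^{-b^{R}(r-t)}z+\frac{a^{R}}{b^{R}}\bigl[1-e^{-b^{R}(r-t)}\bigr]+\sigma^{R}\!\int_{t}^{r}\!e^{-b^{R}(r-s)}dB^{R}_{s}+\int_{[t,r]}\!e^{-b^{R}(r-s)}d\xi_{s}.
\end{equation*}
Evaluating at $r=T$ and using that $\xi$ is a deterministic single-jump process one gets
\begin{equation*}
\mathbb{E}_{z,t}Z^{\xi}_{T}=e^{-b^{R}(T-t)}z+\tfrac{a^{R}}{b^{R}}\bigl[1-e^{-b^{R}(T-t)}\bigr]+e^{-b^{R}(T-p)}(\bar{y}-y)\,\mathbf{1}_{\{p<\infty\}},
\end{equation*}
and, since the jump contributes nothing to randomness, the It\^o isometry gives
\begin{equation*}
\var_{z,t}Z^{\xi}_{T}=(\sigma^{R})^{2}\!\int_{t}^{T}\!e^{-2b^{R}(T-s)}ds=\tfrac{(\sigma^{R})^{2}}{2b^{R}}\bigl[1-e^{-2b^{R}(T-t)}\bigr].
\end{equation*}

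For the premium term, because $\xi$ is a pure-jump process with at most one jump at $p$,
\begin{equation*}
\mathbb{E}_{z,t}\!\int_{t}^{T}\!e^{\rho^{R}(T-r)}c(r)d\xi_{r}=e^{\rho^{R}(T-p)}c(p)(\bar{y}-y)\,\mathbf{1}_{\{p<\infty\}}=\tfrac{1}{\theta^{I}}e^{(-b^{I}+\rho^{R}-\rho^{I})(T-p)}(\bar{y}-y)\,\mathbf{1}_{\{p<\infty\}},
\end{equation*}
after substituting $c(p)=\kappa(p)$. Combining these three pieces into $J^{R}=\mathbb{E}Z^{\xi}_{T}+\gamma^{R}\var Z^{\xi}_{T}-\theta^{R}\mathbb{E}\int e^{\rho^{R}(T-\cdot)}c\,d\xi$ collects the $(t,z)$-only terms into the first line of \eqref{K} and groups the $(p,y)$-dependent jump contributions into the bracketed factor times $\mathbf{1}_{\{p<\infty\}}$.

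The computation itself is essentially routine once the reduction to a deterministic single-jump $\xi$ is in place; the only conceptual step that needs care is justifying why, when computing $K$, the reinsurer's premium at a purchase instant must equal $\kappa(p)$. I would handle this by noting that $p(t;c)$ depends on $c$ only through the first hitting time of $\{c\le\kappa\}$, so any equilibrium $\hat{c}$ entering $J^{R}$ via $\xi^{\hat{\Xi}}$ is free to be taken as $\kappa(p)$ at the purchasing instant without altering $p$, and any other value would either not trigger a purchase or would lead to a strictly worse objective at the single purchase time. Once this is spelled out, formula \eqref{K} follows by collecting terms.
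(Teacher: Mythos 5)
Your proposal is correct and follows essentially the same route as the paper: reduce to a single-jump $\xi$ with $c(p)=\kappa(p)$ at the purchase instant (the paper does this in the discussion preceding the proposition), then solve the OU dynamics explicitly and compute the mean, variance, and premium terms separately for $p<\infty$ and $p=\infty$. The only cosmetic difference is that you write the controlled OU solution as one formula with an integral against $d\xi$, whereas the paper pieces it together on $[t,p)$ and $[p,T]$; the computations agree.
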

\begin{proof}
We consider two cases $p<\infty$ and $p=\infty$ separately.\\
{\bf Case 1: $p<\infty$\\}
Based on the definition of $K(t,y,z,p) $, we have 
\begin{equation*}
K(t,y,z,p)=\mathbb{E}_{z,t}Z^{\hat{\xi}}_{T}+\gamma^{R}\var_{z,t}Z^{\hat{\xi}}_{T}-\theta^{R}\mathbb{E}_{z,t}\int_{t}^{T}e^{\rho^{R}(T-r)}c(r)d\hat{\xi}_{r},
\end{equation*}
where $\hat{\xi}_{r}=y+(\bar{y}-y)1_{r\ge p},\forall r\in[t,T]$ and $Z^{\hat{\xi}}$ is the controlled risk exposure process under $\hat{\xi}$. Then
\begin{equation*}
\left\{
\begin{array}{l}
dZ^{\hat{\xi}}_{r}=(a^{R}-b^{R}Z^{\hat{\xi}}_{r})dr+\sigma^{R} dB^{R}_{r},\ r\in[t,p),\\
Z^{\hat{\xi}}_{t-}=z,
\end{array}
\right.
\end{equation*}
and 
\begin{equation*}
Z^{\hat{\xi}}_{p-}=ze^{-b^{R}(p-t)}+\frac{a^{R}}{b^{R}}[1-e^{-b^{R}(p-t)}]+\sigma^{R}\int_{t}^{p}e^{-b^{R}(p-r)}dB^{R}_{r}.
\end{equation*}
Similarly,
 we have
\begin{equation*}
\left\{
\begin{array}{l}
dZ^{\hat{\xi}}_{r}=(a^{R}-b^{R}Z^{\hat{\xi}}_{r})dr+\sigma^{R} dB^{R}_{r},\ r\in[p,T],\\
Z^{\hat{\xi}}_{p}=Z^{\hat{\xi}}_{p-}+\bar{y}-y,
\end{array}
\right.
\end{equation*}
 and 
\begin{equation*}
Z^{\hat{\xi}}_{T}=ze^{-b^{R}(T-t)}+(\bar{y}-y)e^{-b^{R}(T-p)}+\frac{a^{R}}{b^{R}}[1-e^{-b^{R}(T-t)}]+\sigma^{R}\int_{t}^{T}e^{-b^{R}(T-r)}dB^{R}_{r}.
\end{equation*}
Hence, for $p<\infty$, we have
\begin{align*}
K(t,y,z,p)=&ze^{-b^{R}(T-t)}+\frac{a^{R}}{b^{R}}\Big[1-e^{-b^{R}(T-t)}\Big]+\gamma^{R}\frac{(\sigma^{R})^{2}}{2b^{R}}\Big[1-e^{-2b^{R}(T-t)}\Big]\\
&+\Big[e^{-b^{R}(T-p)}-\frac{\theta^{R}}{\theta^{I}}e^{(-b^{I}+\rho^{R}-\rho^{I})(T-p)}\Big](\bar{y}-y).
\end{align*}
{\bf Case 2: $p=\infty$\\}
In this case, there is no reinsurance purchase. The terminal reinsurer's risk exposure process following the mean-reverting dynamic,
\begin{equation*}
Z_{T}=ze^{-b^{R}(T-t)}+\frac{a^{R}}{b^{R}}[1-e^{-b^{R}(T-t)}]+\sigma^{R}\int_{t}^{T}e^{-b^{R}(p-r)}dB^{R}_{r}.
\end{equation*}
Hence, for $p=\infty$
\begin{equation*}
K(t,y,z,p)=ze^{-b^{R}(T-t)}+\frac{a^{R}}{b^{R}}\Big[1-e^{-b^{R}(T-t)}\Big]+\gamma^{R}\frac{(\sigma^{R})^{2}}{2b^{R}}\Big[1-e^{-2b^{R}(T-t)}\Big].
\end{equation*}
Combining the two cases concludes the proof.
\end{proof}
Using Proposition \ref{P46} and Lemma \ref{equil-lemma}, we obtain the following proposition that further sketches the reinsurer's equilibrium reinsurance law. The proof is straightforward by substituting (\ref{K}) into  Conditions in Lemma \ref{equil-lemma} and we omit them  here.

\begin{proposition}
\label{furthersketch}
Define the continuously differentiable function
\begin{equation*}
\phi(t)=e^{-b^{R}(T-t)}-\frac{\theta^{R}}{\theta^{I}}e^{(-b^{I}+\rho^{R}-\rho^{I})(T-t)}.
\end{equation*}
Then any division $(W^{\hat{\Xi}}, P^{\hat{\Xi}})$ of $[0, T]$ is a reinsurer's equilibrium reinsurance law if and only if the following conditions hold (In the following, the sub waiting and purchasing regions are w.r.t. $W^{\hat{\Xi}}$ and $P^{\hat{\Xi}}$).\\
(a) If $(t_{1},t_{2})$ is a sub waiting region, then
\begin{equation*}
\phi(t)\ge \phi(t_{2}),\forall t\in(t_{1},t_{2}).
\end{equation*}
(b) If $(t_{1},T]$ is the last sub waiting region , then
\begin{equation*}
\phi(t)\ge 0,\forall t\in(t_{1},T].
\end{equation*}
(c)  If $[t_{1},t_{2}]$ is a sub purchasing region with a neighboring sub waiting region $(t_{2},t_{3})$ where $t_{3}<T$, then
\begin{equation*}
\phi'(t)\ge 0,\forall t\in[t_{1},t_{2}],
\end{equation*}
and 
\begin{equation*}
\phi(t_{2})\le \phi(t_{3}).
\end{equation*}
(d) If $[t_{1},t_{2}]$ is a last sub purchasing region with $t_{2}=T$ or a sub purchasing region  with a neighboring last sub waiting region $(t_{2}, T]$, then
\begin{equation*}
\phi'(t)\ge 0,\forall t\in[t_{1},T],
\end{equation*}
and 
\begin{equation*}
\phi(t_{2})\le 0.
\end{equation*}
\end{proposition}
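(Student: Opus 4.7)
The plan is to substitute the explicit expression (\ref{K}) for $K$ into each of the four conditions of Lemma \ref{equil-lemma} and read off the corresponding conditions on the auxiliary function $\phi$. The crucial simplification is that the first three terms of (\ref{K}) depend only on $(t,y,z)$ (not on $p$), hence cancel in every difference $K(t,y,z,p_{1})-K(t,y,z,p_{2})$, leaving just $(\bar{y}-y)\bigl[\phi(p_{1})1_{p_{1}<\infty}-\phi(p_{2})1_{p_{2}<\infty}\bigr]$. Since the relevant cases all assume $y<\bar{y}$, we may divide by the positive factor $\bar{y}-y$, and each condition in Lemma \ref{equil-lemma} reduces to an inequality purely on $\phi$, which I will then match to the cases (a)--(d) of Proposition \ref{furthersketch}.

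For the waiting-region conditions (a) and (b) of Lemma \ref{equil-lemma}, the insurer's purchase time starting from $t\in(t_{1},t_{2})$ is $p(t;\hat{c})=t_{2}$ (or $\infty$ in the last-waiting case). Thus (a) becomes $\liminf_{h\to 0}h^{-1}\bigl[\phi(t+\delta(t,h))-\phi(t_{2})\bigr]\ge 0$, and taking $\delta\equiv 0$ immediately forces $\phi(t)\ge\phi(t_{2})$; conversely, strict inequality sends the liminf to $+\infty$, while interior points of equality are local minima of $\phi$ on $(t_{1},t_{2})$ and so automatically satisfy $\phi'=0$, closing the equivalence. Case (b) is analogous with $\phi(t_{2})$ replaced by $0$, and the boundary check $K(T,y,z,T)\ge K(T,y,z,\infty)$ yields $\phi(T)\ge 0$. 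For the purchasing-region conditions (c) and (d), the perturbation inequality reduces to $\liminf_{h\to 0}h^{-1}\bigl[\phi(t+\delta(t,h))-\phi(t)\bigr]\ge 0$ for every $\delta(t,h)\in[0,h]$; taking $\delta=h$ together with the smoothness of $\phi$ identifies this with $\phi'(t)\ge 0$. The remaining algebraic inequality at $t_{2}$ translates via the definition of $q$ to $\phi(t_{2})\le\phi(t_{3})$ in case (c), and to $\phi(t_{2})\le 0$ in both subcases of (d) (with the $t_{2}=T$ subcase coming from $K(T,y,z,T)\le K(T,y,z,\infty)$ directly, and the other subcase from $q(t_{2};\hat{c})=\infty$). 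Continuity of $\phi$, combined with $\phi\ge 0$ already established on the adjoining last sub waiting region, then extends $\phi'\ge 0$ from $[t_{1},t_{2}]$ to the whole interval $[t_{1},T]$ in case (d2).

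The main obstacle I foresee is not computational but quantifier-bookkeeping: Lemma \ref{equil-lemma} demands the perturbation inequality hold for \emph{every} admissible choice of $\delta(t,h)\in[0,h]$, while Proposition \ref{furthersketch} offers pointwise sign conditions on $\phi$ and $\phi'$. The equivalence relies on the explicit smoothness of $\phi$ so that the worst-case $\delta$ is captured by either $\delta\equiv 0$ or $\delta\equiv h$ up to $o(h)$ corrections via the mean value theorem; once this reduction is pinned down, the four cases of Proposition \ref{furthersketch} match those of Lemma \ref{equil-lemma} one-to-one and the remaining verifications are algebraic.
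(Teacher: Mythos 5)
Your overall route is the intended one: the paper itself omits the proof, describing it as ``straightforward by substituting (\ref{K}) into the conditions in Lemma \ref{equil-lemma}'', and your cancellation of the $p$-independent terms of $K$, division by $\bar y-y>0$, and case-by-case translation is exactly that substitution. Your treatment of (a) and (b) (including the observation that interior points where $\phi$ equals the right-endpoint value are automatically critical points), the reduction of the perturbation inequality in the purchasing region to $\phi'(t)\ge 0$ via the $C^{1}$-regularity of $\phi$, and the endpoint inequalities $\phi(t_{2})\le\phi(t_{3})$ and $\phi(t_{2})\le 0$ are all correct.

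There is, however, a genuine gap in the second subcase of (d): a sub purchasing region $[t_{1},t_{2}]$ with $t_{2}<T$ adjoining the last sub waiting region $(t_{2},T]$. A literal substitution of (\ref{K}) into Lemma \ref{equil-lemma} yields only $\phi'\ge 0$ on $[t_{1},t_{2}]$ (from condition (c) of the lemma), $\phi(t_{2})\le 0$ (since $q(t_{2};\hat c)=\infty$), and $\phi\ge 0$ on $(t_{2},T]$ (from condition (b) of the lemma applied to the last waiting region). Proposition \ref{furthersketch}(d) asserts the strictly stronger statement $\phi'\ge 0$ on all of $[t_{1},T]$, and your justification --- continuity of $\phi$ together with $\phi\ge 0$ on the adjoining waiting region --- is a non sequitur: a nonnegative $C^{1}$ function vanishing at the left endpoint of an interval can perfectly well have negative derivative inside it, so the ``only if'' direction of the equivalence is not established by your argument (the ``if'' direction is unaffected, since stronger conditions still imply the lemma's). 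To close the gap you must invoke the specific two-exponential form of $\phi$: it has at most one zero and at most one critical point, and in the only parameter regime where $\phi$ could rise and then fall (case (2c) of Subsection \ref{subsec-edr}, where $b^{I}+\rho^{I}-\rho^{R}>b^{R}>0$) one has $\phi(t)\to 0$ as $t\to-\infty$, hence $\phi>0$ strictly to the left of its critical point; this contradicts $\phi(t_{2})=0$ (forced by $\phi(t_{2})\le 0$ together with $\phi\ge 0$ on $(t_{2},T]$) with $t_{2}$ lying in the nondecreasing part. That structural argument, not continuity, is what rules out the offending configuration and makes condition (d) of the proposition equivalent to the conditions of the lemma.
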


As a byproduct, we find that substituting (\ref{K}) into the conditions in Definition \ref{eqts} leads to the same conditions in the last proposition. Hence, the equilibrium conditions in Definition \ref{eqps} are equivalent to the equilibrium conditions in Definition \ref{eqts}, i.e., we have the following corollary.
\begin{corollary}
    $\hat{c}$ is an equilibrium premium strategy of the reinsurer's problem if and only if $p(t;\hat{c})$ given by (\ref{ptc}) is an equilibrium selected time of the reinsurer's time-selection problem.
\end{corollary}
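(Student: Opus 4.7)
My plan is to prove the equivalence by reducing each side to the same structural conditions on the division $(W^{\hat{\Xi}}, P^{\hat{\Xi}})$ that appear in Proposition \ref{furthersketch}. Lemma \ref{equil-lemma} together with Proposition \ref{furthersketch} already identifies ``$\hat{c}$ is an equilibrium premium strategy'' with that list of $\phi$-conditions, so the task reduces to substituting the explicit formula (\ref{K}) into Definition \ref{eqts} and verifying that the resulting conditions on $(W^{\hat{\Xi}}, P^{\hat{\Xi}})$ are exactly the same list.

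Substituting (\ref{K}) into Condition (a) of Definition \ref{eqts}, the $\phi$-independent terms cancel on both sides, leaving a comparison of $\phi$ scaled by $\bar{y}-y$. The three cases to process are: $t \in P^{\hat{\Xi}}$, which gives a trivial inequality since $p(t;\hat{c}) = t$; $t$ in a sub waiting region $(t_1, t_2)$ with $t_2 \in P^{\hat{\Xi}}$, which reduces to $\phi(t) \ge \phi(t_2)$, matching Proposition \ref{furthersketch} (a); and $t$ in the last sub waiting region $(t_1, T]$, which becomes $\phi(t) \ge 0$, matching Proposition \ref{furthersketch} (b). Substituting into Condition (b): for $t \in W^{\hat{\Xi}}$ one has $p^h(t;\hat{c}) = p(t;\hat{c})$ for all small $h > 0$ so the condition is vacuous, while for $t \in P^{\hat{\Xi}}$ one has $p^h(t;\hat{c}) = t+h$ and the $h \to 0$ limit yields $\phi'(t)(\bar{y}-y) \ge 0$, matching the derivative parts of Proposition \ref{furthersketch} (c) and (d).

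The delicate part, which I expect to be the main obstacle, is to match the remaining boundary inequalities $\phi(t_2) \le \phi(t_3)$ in Proposition \ref{furthersketch} (c) and $\phi(t_2) \le 0$ in (d). These arise in Definition \ref{eqps} from a finite-size premium perturbation that forces the insurer's purchase across the entire waiting gap from $t_2$ all the way to $t_3$ (or to $\infty$), whereas Definition \ref{eqts} (b) only offers an infinitesimal delay by $h$. My plan is to recover them by combining Condition (a) of Definition \ref{eqts} on the adjacent sub waiting region with the smoothness of $\phi$ at $t_2$: passing the bound $\phi(t) \ge \phi(t_3)$ on $(t_2, t_3)$ to the limit as $t \to t_2^+$ together with the right-derivative constraint $\phi'(t_2^+) \ge 0$ from Condition (b) and the non-decreasing property of $\phi$ on the neighbouring purchasing region should pin down the boundary value of $\phi(t_2)$ consistently with the required inequality. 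Once this reconciliation is completed, both equilibrium notions determine the same divisions of $[0, T]$ and the iff statement follows.
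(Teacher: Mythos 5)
Your reduction of both equilibrium notions to conditions on the division $(W^{\hat{\Xi}},P^{\hat{\Xi}})$ is the intended route (the paper itself only asserts that substituting (\ref{K}) into Definition \ref{eqts} reproduces the conditions of Proposition \ref{furthersketch}), and your treatment of the routine cases — condition (a) on waiting regions giving $\phi(t)\ge\phi(t_2)$ or $\phi(t)\ge 0$, condition (b) giving $\phi'(t)\ge 0$ on the purchasing region — is correct. You have also correctly located the crux: the boundary inequalities $\phi(t_2)\le\phi(t_3)$ and $\phi(t_2)\le 0$ at the right endpoint of a purchasing region.

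However, your plan for closing that gap cannot work, because the inequality you propose to pass to the limit points the wrong way. Condition (a) of Definition \ref{eqts} on the adjacent sub waiting region $(t_2,t_3)$ gives $\phi(t)\ge\phi(t_3)$ there, so letting $t\downarrow t_2$ and using continuity yields $\phi(t_2)\ge\phi(t_3)$ — the reverse of the required $\phi(t_2)\le\phi(t_3)$. Adding $\phi'(t_2^+)\ge 0$ and monotonicity of $\phi$ on $[t_1,t_2]$ only reinforces lower bounds on $\phi(t_2)$; nothing in your list produces an upper bound. The inequality is not redundant: take $\phi$ strictly increasing, $P^{\hat{\Xi}}=[0,t_2]$, $W^{\hat{\Xi}}=(t_2,T]$ with $\phi(t_2)>0$. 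Then $\phi'\ge 0$ on $P^{\hat{\Xi}}$ and $\phi\ge 0$ on $(t_2,T]$, so all conditions you extract from Definition \ref{eqts} (reading $p^h(t;c)=\max\{p(t;c),t+h\}=t+h$ literally) are satisfied, yet Proposition \ref{furthersketch}(d) fails because $\phi(t_2)>0$. What actually produces $\phi(t_2)\le\phi(t_3)$ and $\phi(t_2)\le 0$ on the Definition \ref{eqps} side is the second inequality in Lemma \ref{equil-lemma}(c)--(d): perturbing the premium at the single point $t_2$ above the threshold makes the insurer's purchase time jump by a \emph{finite} amount to $q(t_2;\hat{c})$, so the numerator $K(t_2,y,z,q(t_2;\hat{c}))-K(t_2,y,z,t_2)$ is $O(1)$ and must be nonnegative. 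To obtain the same inequality from Definition \seqref is impossible unless the perturbed time $p^h(t_2;\hat{c})$ is interpreted as the re-selected time under the law with $[t_2,t_2+h)$ excluded — i.e. $q(t_2;\hat{c})$ rather than the literal $t_2+h$ — in which case the $O(1)/h$ quotient forces $\phi(t_2)\le\phi(q(t_2;\hat{c}))1_{q<\infty}$ directly. Your write-up needs either this reinterpretation of $p^h$ or an explicit argument exploiting the specific two-exponential form of $\phi$; the limiting argument as stated does not deliver the inequality.

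(Notational correction for the above: by Definition \seqref I mean Definition \ref{eqts}.)
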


\subsection{Equilibrium in different regions}
\label{subsec-edr}
In this subsection, we use Proposition \ref{furthersketch} to find the explicit form of the reinsurer's equilibrium reinsurance law, which can further generate equilibrium premium strategies through (\ref{equil-premium}).

First of all, let us introduce two important parameters for further analysis.

\begin{definition}
\label{para}
We refer $\theta^{I}$ and $\theta^{R}$ to the reinsurance preferences of the insurer and the reinsurer. And we refer $b^{I}+\rho^{I}$ and $b^{R}+\rho^{R}$ to the reversion-discount sums of the insurer and the reinsurer. We define the premium preference ratio $r$ and the reversion-discount difference $d$ by
\begin{align*}
&r:=\frac{\theta^{I}}{\theta^{R}},\\
&d:=(b^{I}+\rho^{I})-(b^{R}+\rho^{R}).
\end{align*}
\end{definition}
Then, using Proposition \ref{furthersketch}, we can obtain the reinsurer's equilibrium reinsurance law in different cases of the reversion-discount difference and the premium preference ratio.

(1) The first class of cases is when $d=0$. In this class of cases, $\phi$ keeps constant. We further have the following three cases.

(1a) If $r=1$, then $\phi$ is always zero and any $(W^{\hat{\Xi}},P^{\hat{\Xi}})$ as a division of $[0,T]$ is a reinsurer's equilibrium reinsurance law.

(1b) If $r<1$, then $\phi$ is strictly decreasing with $\phi(T)<0$. We deduce that
\begin{equation*}
\left\{
\begin{array}{l}
W^{\hat{\Xi}}=[0,T),\\
P^{\hat{\Xi}}=\{T\},
\end{array}
\right.
\end{equation*}
is the reinsurer's equilibrium reinsurance law.

(1c) If $r>1$, then $\phi$ is strictly increasing with $\phi(0)>0$ and 
\begin{equation*}
\left\{
\begin{array}{l}
W^{\hat{\Xi}}=[0,T],\\
P^{\hat{\Xi}}=\emptyset,
\end{array}
\right.
\end{equation*}
is the reinsurer's equilibrium reinsurance law.

(2) The second class of cases is when $d\neq 0$. We further have the following four cases according to the monotonicity of $\phi$.

(2a) $\phi$ is increasing in $[0,T]$, which holds if
\begin{equation*}
r\ge (1+\frac{d}{b^{R}})\max\{1,e^{-Td}\}.
\end{equation*}
Then the reinsurer's equilibrium reinsurance law $(W^{\hat{\Xi}},P^{\hat{\Xi}})$ is given by
\begin{equation*}
\left\{
\begin{array}{l}
\left\{
\begin{array}{l}
W^{\hat{\Xi}}=[0,T],\\
P^{\hat{\Xi}}=\emptyset.
\end{array}
\right.
,\ r> e^{-Td},\vspace{1ex}\\
\left\{
\begin{array}{l}
W^{\hat{\Xi}}=\emptyset,\\
P^{\hat{\Xi}}=[0,T].
\end{array}
\right.
,\ r\le 1,\vspace{1ex}\\
\left\{
\begin{array}{l}
W^{\hat{\Xi}}=\big(T+\frac{1}{d}\ln(r),T\big],\\
P^{\hat{\Xi}}=\big[0,T+\frac{1}{d}\ln(r)\big].
\end{array}
\right.
,\ {\rm else}.
\end{array}
\right.
\end{equation*}

(2b) $\phi$ is decreasing in $[0,T]$, which holds if
\begin{equation*}
r\le (1+\frac{d}{b^{R}})\min\{1,e^{-Td}\}.
\end{equation*}
Then the reinsurer's equilibrium reinsurance law $(W^{\hat{\Xi}},P^{\hat{\Xi}})$ is given by
\begin{equation*}
\left\{
\begin{array}{l}
\left\{
\begin{array}{l}
W^{\hat{\Xi}}=[0,T],\\
P^{\hat{\Xi}}=\emptyset,
\end{array}
\right.
 ,\ r\ge 1,\vspace{1ex}\\
 \left\{
\begin{array}{l}
W^{\hat{\Xi}}=[0,T),\\
P^{\hat{\Xi}}=\{T\},
\end{array}
\right.
,\ r\le 1,
\end{array}
\right.
\end{equation*}
where both laws are reinsurer's equilibrium reinsurance laws when $r=1$.

(2c) $\phi$ is first increasing then decreasing in $[0,T]$, which holds if
\begin{equation*}
(1+\frac{d}{b^{R}})e^{-Td}<r< 1+\frac{d}{b^{R}}.
\end{equation*}
In this case, we have $b^{I}-\rho^{R}+\rho^{I}>b^{R}>0$ and then $\lim\limits_{t\rightarrow-\infty}\phi(t)=0$, which indicates that $\phi(0)>0$. Then the reinsurer's equilibrium reinsurance law $(W^{\hat{\Xi}},P^{\hat{\Xi}})$ is given by
\begin{equation*}
\left\{
\begin{array}{l}
\left\{
\begin{array}{l}
W^{\hat{\Xi}}=[0,T],\\
P^{\hat{\Xi}}=\emptyset,
\end{array}
\right.
 ,\ r\ge 1,\vspace{1ex}\\
 \left\{
\begin{array}{l}
W^{\hat{\Xi}}=[0,T),\\
P^{\hat{\Xi}}=\{T\},
\end{array}
\right.
,\ r\le 1,
\end{array}
\right.
\end{equation*}
where both laws are reinsurer's equilibrium reinsurance laws when $r=1$.

(2d) $\phi$ is first decreasing then increasing in $[0,T]$, which holds if
\begin{equation*}
1+\frac{d}{b^{R}}<r< (1+\frac{d}{b^{R}})e^{-Td}.
\end{equation*}
In this case, we have $b^{R}>b^{I}-\rho^{R}+\rho^{I}>0$ and then $\lim\limits_{t\rightarrow-\infty}\phi(t)=0$, which indicats that $\phi(0)<0$.
Then the reinsurer's equilibrium reinsurance law $(W^{\hat{\Xi}},P^{\hat{\Xi}})$ is given by
\begin{equation*}
\left\{
\begin{array}{l}
\left\{
\begin{aligned}
W^{\hat{\Xi}}=&\big([0,T+\frac{1}{d}\ln(\frac{b^{R}r}{b^{R}+d})\big),\\
P^{\hat{\Xi}}=&\big[T+\frac{1}{d}\ln(\frac{b^{R}r}{b^{R}+d}),T\big].
\end{aligned}
\right.
 ,\ r\le 1,\vspace{1ex}\\
 \left\{
\begin{aligned}
W^{\hat{\Xi}}=&\big[0,T+\frac{1}{d}\ln(\frac{b^{R}r}{b^{R}+d})\big)\bigcup\big(T+\frac{1}{d}\ln(r),T\big],\\
P^{\hat{\Xi}}=&\big[T+\frac{1}{d}\ln(\frac{b^{R}r}{b^{R}+d}),T+\frac{1}{d}\ln(r)\big].
\end{aligned}
\right.
,\ r>1.
\end{array}
\right.
\end{equation*}

The above analysis implies several important boundaries $d=0$, $r=1$, $r=e^{-Td}$, $r=1+\frac{d}{b^{R}}$ and $r=(1+\frac{d}{b^{R}})e^{-Td}$. For a clearer presentation of the above analysis, we label the regions separated by these boundaries.

\begin{definition}
Define the following eight sub regions of $\mathbb{R}\times\mathbb{R}^{+}$:
\begin{align*}
&\uppercase\expandafter{\romannumeral 1}=\Big\{(d,r)\Big|r>1, r>e^{-Td}\Big\},\\
&\uppercase\expandafter{\romannumeral 2}=\Big\{(d,r)\Big|r>1, r\le e^{-Td}\Big\},\\
&\uppercase\expandafter{\romannumeral 3}=\Big\{(d,r)\Big|r\le 1, r\ge \big(1+\frac{d}{b^{R}}\big)e^{-Td}, d<0\Big\},\\
&\uppercase\expandafter{\romannumeral 4}=\Big\{(d,r)\Big|r\le 1, r<\big(1+\frac{d}{b^{R}}\big)e^{-Td} , r\ge 1+\frac{d}{b^{R}}\Big\},\\
&\uppercase\expandafter{\romannumeral 5}=\Big\{(d,r)\Big|r<1+\frac{d}{b^{R}},r<1\},\\
&\uppercase\expandafter{\romannumeral 6}=\Big\{(d,r)\Big|r=1,d>0\Big\},\\
&\uppercase\expandafter{\romannumeral 7}=\Big\{(0,1)\Big\},\\
&\uppercase\expandafter{\romannumeral 8}=\Big\{(d,r)\Big|1<r<(1+\frac{d}{b^{R}})e^{-Td},d<0\Big\}.
\end{align*}
Among the eight regions, we call regions $\uppercase\expandafter{\romannumeral 1}\sim\uppercase\expandafter{\romannumeral 5}$ and $\uppercase\expandafter{\romannumeral 8}$ area regions; the region $\uppercase\expandafter{\romannumeral 6}$ consists of 
 a line and we call it a line region; the region $\uppercase\expandafter{\romannumeral 7}$ consists of a single point and we call it a point region; the area region $\uppercase\expandafter{\romannumeral 8}$ only appears if the time horizon is long enough, i.e.,  $T>\frac{1}{b^{R}}$.
\end{definition}
These regions are shown in Figure \ref{regions1} for the short-term case $T\le\frac{1}{b^{R}}$ and Figure \ref{regions2} for the long-term case $T>\frac{1}{b^{R}}$.

\begin{figure}[ht]
\centering
\includegraphics[width=4in, keepaspectratio]{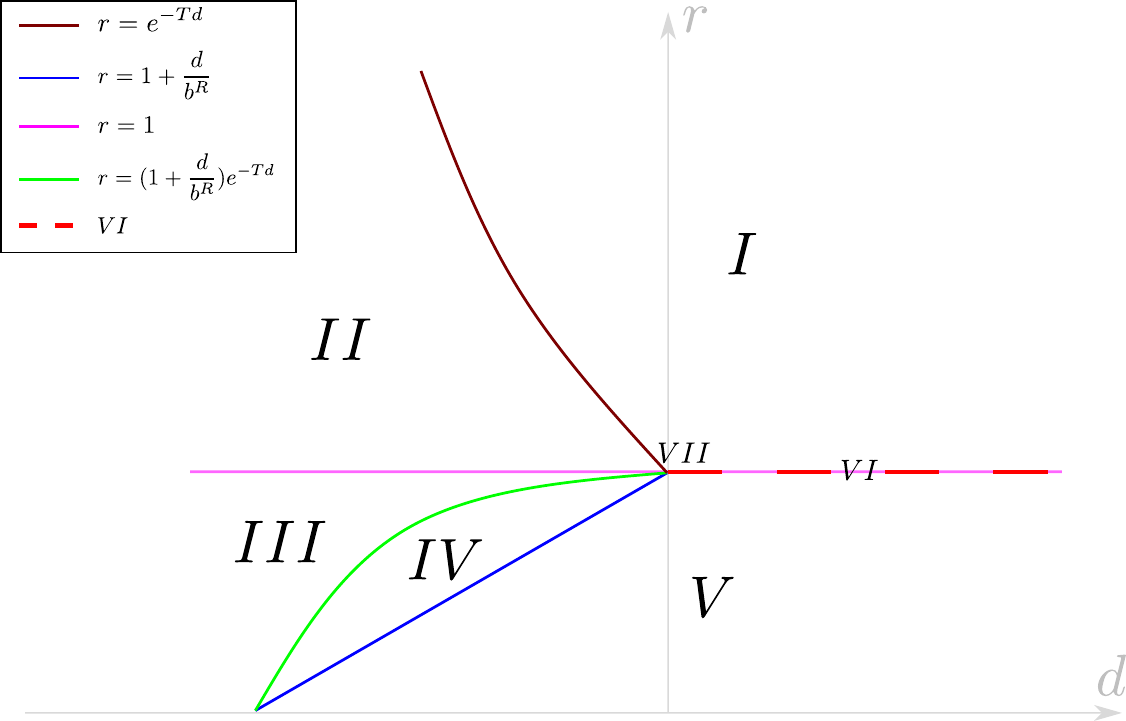}\\
\caption{An illustration of the seven regions for the short-term case $T\le \frac{1}{b^{R}}$ (region $\uppercase\expandafter{\romannumeral 8}$ does not exist in the short-term case). The boundary lines are shown in solid lines. The line region $\uppercase\expandafter{\romannumeral 6}$ is marked as a dashed line.}
\label{regions1}
\end{figure}

\begin{figure}[ht]
\centering
\includegraphics[width=4in, keepaspectratio]{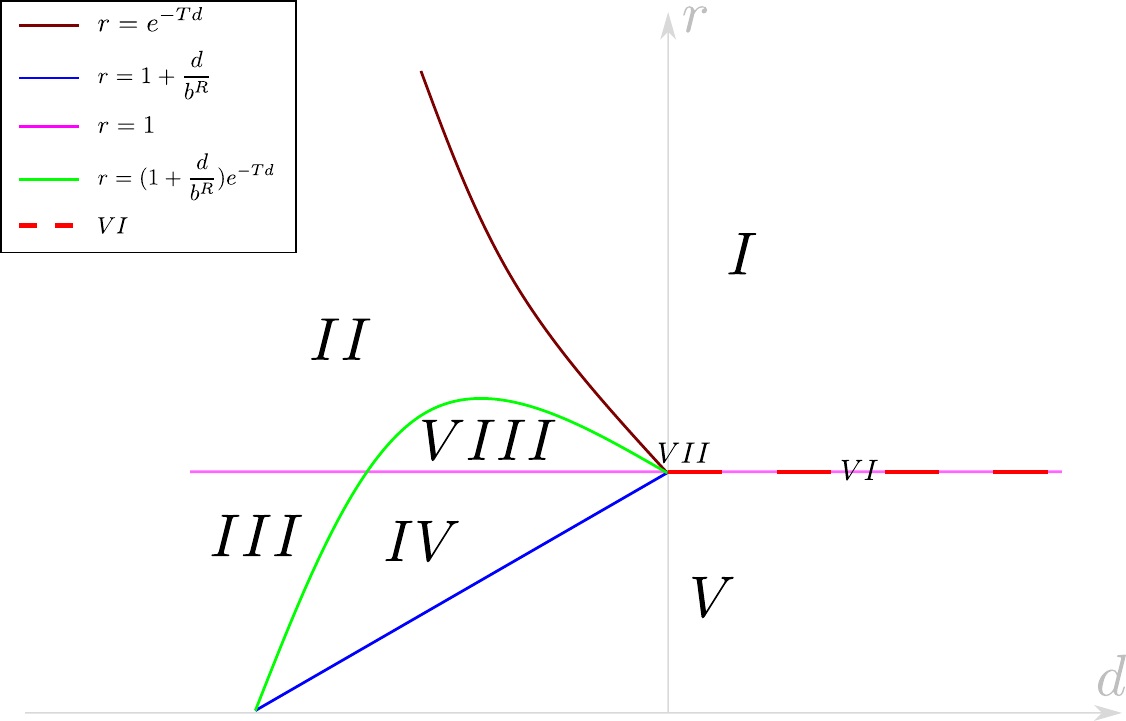}\\
\caption{An illustration of the eight regions for the long-term case $T> \frac{1}{b^{R}}$. The boundary lines are shown in solid lines. The line region $\uppercase\expandafter{\romannumeral 6}$ is marked as a dashed line.}
\label{regions2}
\end{figure}
With the above definitions, we can summarize the main result of this section in the following theorem. The proof directly follows from the analysis in this subsection and is omitted here.
\begin{theorem}
\label{theo-rei}
The reinsurer's equilibrium reinsurance law $(W^{\hat{\Xi}},P^{\hat{\Xi}})$ is given in Table \ref{rerl}. Moreover, given any reinsurer's equilibrium reinsurance law $(W^{\hat{\Xi}},P^{\hat{\Xi}})$, any premium strategy generated by $(W^{\hat{\Xi}},P^{\hat{\Xi}})$ according to (\ref{equil-premium}) is an equilibrium premium strategy.
\end{theorem}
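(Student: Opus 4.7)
The plan is to verify each row of Table \ref{rerl} by feeding the explicit function $\phi$ into the characterization of reinsurer's equilibrium reinsurance laws given in Proposition \ref{furthersketch}, and then, separately, to check that any $\hat c$ built from such a law via (\ref{equil-premium}) satisfies the conditions of Definition \ref{eqps}.

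First I would analyse the monotonicity and sign of
\begin{equation*}
\phi(t) = e^{-b^R(T-t)} - \frac{\theta^R}{\theta^I}\, e^{(-b^I + \rho^R - \rho^I)(T-t)}.
\end{equation*}
Differentiating and rewriting the condition $\phi'(t) \gtreqless 0$ in terms of $d = (b^I+\rho^I)-(b^R+\rho^R)$ and $r = \theta^I/\theta^R$ shows that $\phi$ is constant when $d=0$, strictly monotone when $d \neq 0$ and $b^R, b^R+d$ have the same sign with the ratio $r$ outside a critical interval, and unimodal otherwise. The interior critical point, when it exists, is $t^{\star} = T + d^{-1}\ln\!\bigl(b^R r/(b^R+d)\bigr)$, and the unique zero of $\phi$, when it exists, is $t^{\circ} = T + d^{-1}\ln(r)$. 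The conditions $t^{\star} \in [0,T]$ and $t^{\circ} \in [0,T]$ correspond precisely to the inequalities $r \gtreqless 1+d/b^R$, $r \gtreqless (1+d/b^R)e^{-Td}$, $r \gtreqless 1$, and $r \gtreqless e^{-Td}$ that define the boundaries of regions $\mathrm{I}$--$\mathrm{VIII}$.

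Next, for each region I would read off the profile of $\phi$ and apply conditions (a)--(d) of Proposition \ref{furthersketch} directly. In regions where $\phi$ is monotone and does not change sign on $[0,T]$ (e.g.\ $\mathrm{I}$, the monotone subcase of $\mathrm{II}$), conditions (a) and (b) force the whole interval to be a waiting region, and conversely condition (d) forces it to be a purchasing region when $\phi \leq 0$ throughout. In regions where $\phi$ is monotone and crosses zero, the unique zero $t^{\circ}$ splits $[0,T]$ into a waiting region and a purchasing region, whose order is pinned down by the sign of $\phi'$ combined with condition (c). In region $\mathrm{VIII}$, where $\phi$ is first decreasing then increasing with $\phi(0) < 0$, condition (c) prohibits purchasing where $\phi' < 0$, so the sub-purchasing region must be sandwiched between $t^{\star}$ and $t^{\circ}$; conditions (a) and (b) then force the two complementary intervals to be waiting regions. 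Case-by-case this reproduces exactly the laws tabulated in Table \ref{rerl}.

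For the second assertion I would observe that any $\hat c$ defined by (\ref{equil-premium}) satisfies $\{t \in [0,T] : \hat c(t) \leq \kappa(t)\} = P^{\hat{\Xi}}$, so the insurer's first-entry time $p(t;\hat c)$ from (\ref{ptc}) is determined purely by $(W^{\hat{\Xi}}, P^{\hat{\Xi}})$. Because $J^R(x,t,y,z;\hat c) = K(t,y,z,p(t;\hat c))$ by Proposition \ref{P46}, Lemma \ref{equil-lemma} shows that the equilibrium conditions on $\hat c$ in Definition \ref{eqps} reduce to exactly the four conditions defining a reinsurer's equilibrium reinsurance law, which hold by the first part. The expected main obstacle is the bookkeeping needed to check that the boundary curves $r = 1$, $r = e^{-Td}$, $r = 1+d/b^R$ and $r = (1+d/b^R)e^{-Td}$ partition $(d,r)$-space so that the local shape of $\phi$ dictated by each region matches the tabulated law without overlap or gap; this is the step where the equilibrium in a last sub-waiting region (requiring $\phi \geq 0$ all the way to $T$) and in a last sub-purchasing region (requiring $\phi'\geq 0$ up to $T$ and $\phi(T) \leq 0$) must be carefully reconciled at the boundary $t=T$.
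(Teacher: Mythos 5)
Your proposal follows essentially the same route as the paper: the paper's (omitted) proof is exactly the case analysis carried out in Subsection \ref{subsec-edr}, namely computing the sign and monotonicity of $\phi$ via the parameters $d$ and $r$ (with critical point $T+\tfrac{1}{d}\ln(b^Rr/(b^R+d))$ and zero $T+\tfrac{1}{d}\ln(r)$), feeding each profile into Proposition \ref{furthersketch}, and then invoking Lemma \ref{equil-lemma} together with (\ref{equil-premium}) for the second assertion. Your computations of the critical point, the zero, and the region boundaries all check out, so the plan is correct and matches the paper's argument.
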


\begin{table}[ht]
\centering
\caption{Reinsurer's equilibrium reinsurance laws in nine regions.}
\label{rerl}
\begin{tabular}{cl}
\hline
\textbf{Regions}           & 
\multicolumn{1}{c}{\textbf{Reinsurer's equilibrium reinsurance laws}} 
\\ 
\hline
\specialrule{0em}{3pt}{3pt}
$\uppercase\expandafter{\romannumeral 1}$  & 
\qquad\quad
$\left\{\begin{array}{l}
W^{\hat{\Xi}}=\big[0, T\big]\\
P^{\hat{\Xi}}=\emptyset
\end{array}\right.$
\\ 
\specialrule{0em}{1.5pt}{1.5pt}
\specialrule{0em}{1.5pt}{1.5pt}
$\uppercase\expandafter{\romannumeral 2}$  & 
\qquad\quad
$\left\{\begin{array}{l}
W^{\hat{\Xi}}=\big(T+\frac{1}{d}\ln(r),T\big]\vspace{1.5pt}\\
P^{\hat{\Xi}}=\big[0,T+\frac{1}{d}\ln(r)\big]
\end{array}\right.$
\\ 
\specialrule{0em}{1.5pt}{1.5pt}
\specialrule{0em}{1.5pt}{1.5pt}
$\uppercase\expandafter{\romannumeral 3}$  &    
\qquad\quad
$\left\{\begin{array}{l}
W^{\hat{\Xi}}=\emptyset\\
P^{\hat{\Xi}}=\big[0,T\big]
\end{array}\right.$
\\ 
\specialrule{0em}{1.5pt}{1.5pt}
\specialrule{0em}{1.5pt}{1.5pt}
$\uppercase\expandafter{\romannumeral 4}$  &   
\qquad\quad
$\left\{\begin{array}{l}
W^{\hat{\Xi}}=\big[0,T+\frac{1}{d}\ln(\frac{b^{R}r}{b^{R}+d})\big)\vspace{1.5pt}\\
P^{\hat{\Xi}}=\big[T+\frac{1}{d}\ln(\frac{b^{R}r}{b^{R}+d}),T\big]
\end{array}\right.$
\\ 
\specialrule{0em}{1.5pt}{1.5pt}
\specialrule{0em}{1.5pt}{1.5pt}
$\uppercase\expandafter{\romannumeral 5}$  &   
\qquad\quad
$\left\{\begin{array}{l}
W^{\hat{\Xi}}=[0,T)\\
P^{\hat{\Xi}}=\{T\}
\end{array}\right.$
\\ 
\specialrule{0em}{1.5pt}{1.5pt}
\specialrule{0em}{1.5pt}{1.5pt}
$\uppercase\expandafter{\romannumeral 6}$  &   
\qquad\quad
$\left\{\begin{array}{l}
W^{\hat{\Xi}}=[0,T)\\
P^{\hat{\Xi}}=\{T\}
\end{array}\right.$
 or 
 $\left\{\begin{array}{l}
W^{\hat{\Xi}}=[0,T]\\
P^{\hat{\Xi}}=\emptyset
\end{array}\right.$
\\ 
\specialrule{0em}{1.5pt}{1.5pt}
\specialrule{0em}{1.5pt}{1.5pt}
$\uppercase\expandafter{\romannumeral 7}$  &     
\qquad\quad
Arbitrary division of $[0,T]$
\\ 
\specialrule{0em}{1.5pt}{1.5pt}
\specialrule{0em}{1.5pt}{1.5pt}
$\uppercase\expandafter{\romannumeral 8}$ &   
\qquad\quad
$\left\{\begin{array}{l}
W^{\hat{\Xi}}=\big[0,T+\frac{1}{d}\ln(\frac{b^{R}r}{b^{R}+d}\big)\bigcup\big(T+\frac{1}{d}\ln(r),T\big]\vspace{1.5pt}\\
P^{\hat{\Xi}}=\big[T+\frac{1}{d}\ln(\frac{b^{R}r}{b^{R}+d},T+\frac{1}{d}\ln(r)\big]
\end{array}\right.$
\\ 
\specialrule{0em}{3pt}{3pt}
\hline
\end{tabular}
\end{table}

\section{An interpretation of the equilibrium strategies}
\label{sec-interpretation}
In this section, we aim to give an interpretation of the equilibrium reinsurance strategy given in Theorem \ref{theo-ins} and the equilibrium premium strategy given in Theorem \ref{theo-rei}. Also, we investigate the impact of various parameters on the equilibrium strategies.

\subsection{Equilibrium reinsurance strategy}

The insurer's equilibrium reinsurance strategy is relatively simple. 

According to Theorem \ref{theo-ins}, the insurer's equilibrium reinsurance law when $y<\bar{y}$, i.e., when there is a choice of reinsurance, is independent of the risk exposure and accumulated reinsurance coverage. The insurer's equilibrium reinsurance law indicates that reinsurance should be purchased if and only if the reinsurance premium is lower than the threshold $\kappa(t)$ given by (\ref{threshold}) at that time.

As a result, the insurer's equilibrium reinsurance strategy, which is generated by the insurer's equilibrium reinsurance law by solving a Skorohod reflection problem (\ref{dynamic1}), is to purchase all remaining amount of reinsurance the first time that the reinsurance premium falls below the time-dependent threshold, which is given by (\ref{ptc}).

The equilibrium reinsurance strategy supports the view that reinsurance contracts should be signed at discrete times rather than continuously adjusted. Specifically, in our problem, the contract is signed just once and it fulfills all reinsurance demands the insurer needs during the whole time horizon. For the insurer, all she needs to do is to choose the best time to sign a once-for-all reinsurance contract. The best time is simply the first time with a relatively low reinsurance premium rate. She does not need to know the reinsurer's global premium strategy in $[0, T]$ to make the decision. In a more realistic scene, neither the reinsurer would disclose his global premium strategy nor would he follow his pre-set global premium strategy. As a result, the insurer only has access to the reinsurance premium rate up to the current time. From this point of view, the insurer's equilibrium reinsurance strategy is practically reasonable and easy to apply.

The purchasing time of the insurer, i.e., the time to sign the reinsurance contract, depends on both the reinsurer's premium strategy and the time-dependent threshold $\kappa(t)$. Recall that 
\begin{equation*}
\kappa(t)=\frac{1}{\theta^{I}}e^{-(b^{I}+\rho^{I})(T-t)},\forall t\in[0,T],
\end{equation*}
we arrive at the following conclusions of sensitivity analysis. First, the larger $\theta^{I}$, the more weight put on the reinsurance premium cost, and the later the purchasing time because the insurer waits for a better opportunity with a relatively lower premium cost. Second, the larger $b^{I}$, i.e., the faster the reverting speed of risk exposure, the later the purchasing time. The cause is likely that the growth of the risk process can be adjusted to the mean value more quickly, which reduces the need to adjust the risk through reinsurance. Third, the larger the discount rate $\rho^{I}$, i.e., the more profit it has to keep the cash in hand, the later the purchasing time. This is similar to the case with larger $\theta^{I}$, where in both cases, purchasing reinsurance is more costly for the insurer and results in a more cautious reinsurance decision. However, it is observed that the parameters $a^{I}$, $\sigma^{I}$ and $\gamma^{I}$, i.e., the mean growth rate of risk, the volatility of risk and the weight put on variance, do not influence the equilibrium reinsurance strategy. It is likely that reinsurance does not affect the variance of terminal risk exposure.

\subsection{Equilibrium premium strategy}

The reinsurer's equilibrium premium strategy is rather complicated because he needs to consider the impact of his premium strategy on the insurer's decision-making. As a result, the parameters related to both the insurer and the reinsurer can influence the equilibrium premium strategy.

As the equilibrium reinsurance strategy is generated by the insurer's equilibrium reinsurance law, the equilibrium premium strategy is generated by the reinsurer's equilibrium reinsurance law. However, according to (\ref{equil-premium}), the generation of the equilibrium premium strategy is not unique to a large extent. It is implied that the premium rate should be set as the highest acceptable price for the insurer if he wants the insurer to purchase reinsurance, i.e., the purchasing region; while the premium rate can be set arbitrarily as long as it exceeds the threshold for the insurer to purchase reinsurance if the reinsurer does not want the reinsurance to be purchased, i.e., the waiting region. The non-uniqueness in the waiting region of the reinsurer's equilibrium reinsurance law is natural from this perspective.

The separation of the waiting region and the purchasing region, or equivalently, when the reinsurer wants the insurer to purchase reinsurance, is rather complicated. There are eight cases, which are depicted in Figure \ref{regions1}$\sim$\ref{regions2} and Table \ref{rerl}.

Among the eight regions, the region $\uppercase\expandafter{\romannumeral 1}$ is the region where the reinsurer's reinsurance preference is relatively small enough in comparison with that of the insurer, i.e.,  $r$ is sufficiently large. In this case, the reinsurer does not intend to sell reinsurance to the insurer. It implies that there is no deal if the insurer considers the reinsurance premium to be relatively too expensive than the reinsurer does. 

At the other extreme, the reinsurer's reinsurance preference is relatively large enough, i.e., $r$ is sufficiently small. In this case, the reinsurer is certain that the insurer will accept the reinsurance contract and his main goal is to maximize the reinsurance premium he earns discounted to the terminal time. There are two sub-cases: one sub-case $d>-b^{R}$ corresponds to the region $\uppercase\expandafter{\romannumeral 5}$ where the high terminal price is more attractive and he intends to sell at the terminal time; the other sub-case $d\le -b^{R}$ belongs to the region $\uppercase\expandafter{\romannumeral 3}$, where the high discount rate is more attractive and he intends to sell immediately.

The region $\uppercase\expandafter{\romannumeral 1}$ and the region $\uppercase\expandafter{\romannumeral 5}$ shares a boundary, which is the line region $\uppercase\expandafter{\romannumeral 6}$. On the boundary, both the strategy resulting in no deal and the strategy resulting in a terminal purchase are preferred as equilibrium strategies.

We have explained part of the region $\uppercase\expandafter{\romannumeral 3}$ from the perspective of reinsurance preference. A better way to explain the region $\uppercase\expandafter{\romannumeral 3}$ is from the perspective of discount. We cannot explain it from the perspective of reversion because the boundary line $r=(1+\frac{d}{b^{R}})e^{-Td}$ moves as $b^{R}$ changes. To be specific, the combination of regions $\uppercase\expandafter{\romannumeral 3}$ and $\uppercase\expandafter{\romannumeral 2}$ corresponds to the case where the reinsurer's discount is sufficiently large, i.e., $\rho^{R}$ is large enough. In both cases, the reinsurer intends to sell reinsurance immediately because the high discount rate stimulates him to earn the reinsurance premium immediately at the cost of bearing a high level of risk exposure. However, in the case with the region $\uppercase\expandafter{\romannumeral 2}$ where the insurer's reinsurance preference dominates, the reinsurer stops the intention of selling reinsurance at the cut-off time $T+\frac{1}{d}\ln(r)$. This is because the insurer considers the reinsurance premium to be relatively more expensive than the reinsurer, but this case is not as extreme as the case with the region $\uppercase\expandafter{\romannumeral 1}$ where the reinsurer stops the intention immediately.

The region $\uppercase\expandafter{\romannumeral 4}$ corresponds to an intermediate case between those of the region $\uppercase\expandafter{\romannumeral 3}$ and the region $\uppercase\expandafter{\romannumeral 5}$. In the region $\uppercase\expandafter{\romannumeral 4}$, the reinsurer's discount is less dominant than that of the region $\uppercase\expandafter{\romannumeral 3}$. As a result, the reinsurer does not intend to sell reinsurance immediately but waits until the start-up time $T+\frac{1}{d}\ln(\frac{b^{R}r}{b^{R}+d})$. However, the reinsurer's reinsurance preference still dominates as in the region $\uppercase\expandafter{\romannumeral 3}$ and there is no cut-off time. If the reinsurer's discount is too small, then the start-up time is postponed to the terminal time and it turns out to be the case in the region $\uppercase\expandafter{\romannumeral 5}$.

From the explanation of the region $\uppercase\expandafter{\romannumeral 4}$, we know that there might be a start-up time if the reinsurer's discount is not dominant enough. It is also the case when the reinsurer's discount becomes less dominant in the region $\uppercase\expandafter{\romannumeral 2}$. However, here the start-up time $T+\frac{1}{d}\ln(\frac{b^{R}r}{b^{R}+d})$ can be negative and it only shows up above zero if $T>\frac{1}{b^{R}}$. When $T>\frac{1}{b^{R}}$, this case with both a start-up time and a cut-off time corresponds to the region $\uppercase\expandafter{\romannumeral 8}$. 

Before we interpret the last point region $\uppercase\expandafter{\romannumeral 7}$. We briefly discuss the limiting case where the reinsurer's reversion tends to infinity. This case cannot be observed directly from Figures \ref{regions1}$\sim$\ref{regions2} because both $d$ and $b^{R}$ tend to infinity. Using the fact that when fixing $d<0$,
\begin{align*}
&\lim\limits_{\Delta b^{R}\rightarrow+\infty}1+\frac{d-\Delta b^{R}}{b^{R}+\Delta b^{R}}=0,\\
&\lim\limits_{\Delta b^{R}\rightarrow+\infty}
\left (1+\frac{d-\Delta b^{R}}{b^{R}+\Delta b^{R}}\right)e^{-T(d-\Delta b^{R})}=\left\{
\begin{array}{l}
+\infty,\ {\rm if}\ d>-b^{R},\\
-\infty,\ {\rm if}\ d<-b^{R},
\end{array}
\right.
\end{align*}
we deduce that if the reinsurer's discount is large enough, i.e., $\rho^{R}\ge b^{I}+\rho^{I}$, then as the reinsurer's reversion tends to infinity, the limiting case would belong to regions $\uppercase\expandafter{\romannumeral 2}$ and $\uppercase\expandafter{\romannumeral 3}$; if the reinsurer's discount is not large enough, i.e., $\rho^{R}<b^{I}+\rho^{I}$, then as the reinsurer's reversion tends to infinity, the limiting case would belong to regions $\uppercase\expandafter{\romannumeral 4}$ and $\uppercase\expandafter{\romannumeral 8}$.

The last point region $\uppercase\expandafter{\romannumeral 7}$ stands for the indifferent case where the insurer's reinsurance preference and reversion-discount sum are both the same as those of the reinsurer. In this case, all strategies are indifferent to the reinsurer.

From the above interpretations, the reinsurer has different intentions of selling reinsurance in different cases, which reflects his equilibrium premium strategy. We summarize the reinsurer's intention of selling reinsurance in all six area regions in Table \ref{arearegion}. We only show the intentions in area regions because the line region $\uppercase\expandafter{\romannumeral 6}$ and the point region $\uppercase\expandafter{\romannumeral 7}$ can be interpreted as the boundary cases of their neighboring area regions. 

\begin{table}[ht]
\centering
\caption{The reinsurer's intention to sell reinsurance in the six cases of the area regions, together with whether there is a cut-off time of the intention before the terminal time, and the start-up time of the intention.}
\label{arearegion}
\begin{tabular}{cccc}
\hline
\textbf{\begin{tabular}[c]{@{}c@{}}area\\ region\end{tabular}} & \textbf{\begin{tabular}[c]{@{}c@{}}an intention \\ to sell reinsurance  \end{tabular}} & \textbf{\begin{tabular}[c]{@{}c@{}}a cut-off time of the intention\\  before the terminal time  \end{tabular}} & \textbf{\begin{tabular}[c]{@{}c@{}}start-up \\ time\end{tabular}} \\ \hline
$\uppercase\expandafter{\romannumeral 1}$                      & no                                                                                    & \textbackslash{}                                                                                              & \textbackslash{}                                                  \\
$\uppercase\expandafter{\romannumeral 2}$                      & yes                                                                                   & no                                                                                                            & before $T$                                              \\
$\uppercase\expandafter{\romannumeral 3}$                      & yes                                                                                   & no                                                                                                            & no                                                                \\
$\uppercase\expandafter{\romannumeral 4}$                      & yes                                                                                   & yes                                                                                                           & no                                                                \\
$\uppercase\expandafter{\romannumeral 5}$                      & yes                                                                                   & no                                                                                                            & $T$                                                    \\
$\uppercase\expandafter{\romannumeral 8}$                      & yes                                                                                   & yes                                                                                                           & before $T$                                              \\ \hline
\end{tabular}
\end{table}

According to Table \ref{arearegion}, the reinsurer is reluctant to sell reinsurance in the region $\uppercase\expandafter{\romannumeral 1}$. Among the regions where the reinsurer has an intention to sell reinsurance, the intention is strong in regions $\uppercase\expandafter{\romannumeral 2}$, $\uppercase\expandafter{\romannumeral 3}$ and $\uppercase\expandafter{\romannumeral 5}$ because no cut-off time before the terminal time guarantees a reinsurance deal. The three cases differ in their start-up times of intention, reflecting different strategies. The intention is relatively low in regions $\uppercase\expandafter{\romannumeral 4}$ and $\uppercase\expandafter{\romannumeral 8}$ and the start-up times also differ in these two cases. From the analysis of limiting cases, we find that the reinsurer's relatively high reinsurance preference and discount are signs of a strong intention to sell reinsurance. Moreover, the reinsurer's relatively low reinsurance preference is a sign of reluctance to sell reinsurance. The reinsurer's relatively low reversion or discount, however, is not a sign of reluctance because its effect can be overwhelmed by the effect of a relatively high reinsurance preference of the reinsurer. Moreover, the reinsurer's relatively high reversion is not a sign of strong intention because it only takes effect when the reinsurer has a relatively high discount. It implies that in the reinsurer's decision-making, the reinsurance preference is more dominant than the discount, while the discount is more dominant than reversion.

\section{Conclusion}
\label{sec-conclusion}

In this paper, we study a reinsurance Stackelberg game in a unified singular control framework considering reinsurance contracts in both continuous and discrete times. The reinsurer is the leader who develops a reinsurance premium strategy and the insurer is the follower seeking a reinsurance strategy. The objectives of the insurer and reinsurer are of MV type and the reinsurance contracts are irreversible. We obtain both equilibrium reinsurance and premium strategies in explicit forms. We find that a single once-for-all reinsurance contract is preferred under MV criterion and irreversibility and the game between the insurer and the reinsurer is on the signing time of the contract. Moreover, we give interpretations of the equilibrium strategies and the influence of parameters are analyzed. It is interesting to consider other or more complicated game structures in various scenes coupled with singular control and various causes of time-inconsistency. Limited to the scope of the paper, we leave these topics for future research.

\section*{Acknowledgments}
The authors acknowledge the support from the National Natural Science Foundation of China (Nos. 12271290, 11871036). The authors also thank the members of the group of Actuarial Science and Mathematical Finance at the Department of Mathematical Sciences, Tsinghua University for their feedbacks and useful conversations.

\bibliographystyle{plainnat}
\bibliography{references}

\end{document}